\newtheorem{proposition}{Proposition}
\newtheorem{theorem}{Theorem}
\newtheorem{lemma}{Lemma}
\newtheorem{corollary}{Corollary}
\definecolor{dkgreen}{rgb}{0,0.6,0}
\definecolor{gray}{rgb}{0.5,0.5,0.5}
\definecolor{mauve}{rgb}{0.58,0,0.82}
\lstdefinestyle{myScalastyle}{
	frame=tb,
	language=scala,
	aboveskip=3mm,
	belowskip=3mm,
	showstringspaces=false,
	columns=flexible,
	basicstyle={\small\ttfamily},
	numbers=none,
	numberstyle=\tiny\color{gray},
	keywordstyle=\color{blue},
	commentstyle=\color{dkgreen},
	stringstyle=\color{mauve},
	frame=single,
	breaklines=true,
	breakatwhitespace=true,
	tabsize=3,
	numbers=left,
	deletekeywords={for},
	otherkeywords={function,==,=, Let}
}
\title{On the Order Type of Scattered Context-Free Orderings\thanks{Ministry of Human Capacities, Hungary grant 20391-3/2018/FEKUSTRAT is acknowledged. Szabolcs Iv\'an was supported by the J\'anos Bolyai Scholarship of the Hungarian Academy of Sciences. Kitti Gelle was supported by the \'UNKP-19-3-SZTE-86 New National Excellence Program of the Ministry of Human Capacities.}}
\author{Kitti Gelle
	\institute{University of Szeged, Hungary}
	\email{kgelle@inf.u-szeged.hu}
	\and
	Szabolcs Iv\'an
	\institute{University of Szeged, Hungary}
	\email{szabivan@inf.u-szeged.hu}
}
\begin{document}
	\maketitle
\begin{abstract}
	We show that if a context-free grammar generates a language whose lexicographic ordering is 
	well-ordered of type less than $\omega^2$, then its order type is effectively computable.
	
%
%
\end{abstract}
		
\section{Introduction}
If an alphabet $\Sigma$ is equipped by a linear order $<$, this order can be extended to the lexicographic
ordering $<_\ell$ on $\Sigma^*$ as $u<_\ell v$ if and only if either $u$ is a proper prefix of $v$ or
$u=xay$ and $v=xbz$ for some $x,y,z\in\Sigma^*$ and letters $a<b$. So any language $L\subseteq \Sigma^*$
can be viewed as a linear ordering $(L,<_\ell)$. Since $\{a,b\}^*$ contains the dense ordering
$(aa+bb)^*ab$ and every countable linear ordering can be embedded into any countably infinite dense ordering,
every countable linear ordering is isomorphic to one of the form $(L,<_\ell)$ for some language $L\subseteq\{a,b\}^*$.
A linear ordering (or an order type) is called \emph{regular} or \emph{context-free}
if it is isomorphic to the linear ordering (or, is the order type) of some language of the appropriate type.
It is known~\cite{DBLP:journals/fuin/BloomE10} that an ordinal is regular if and only if it is less than $\omega^\omega$
and is context-free if and only if it is less than $\omega^{\omega^\omega}$. Also, the Hausdorff rank~\cite{rosenstein}
of any scattered regular (context-free, resp.) ordering is less than $\omega$ ($\omega^\omega$, resp)~\cite{ITA_1980__14_2_131_0,10.1007/978-3-642-29344-3_25}.

It is known~\cite{GelleIvanTCS} that the order type of a well-ordered language generated by a prefix grammar (i.e. in which
each nonterminal generates a prefix-free language)
is computable,
thus the isomorphism problem of context-free ordinals is decidable if the ordinals in question are given as the lexicograpic
ordering of \emph{prefix} grammars.
Also, the isomorphism problem of regular orderings is decidable as well~\cite{DBLP:journals/ita/Thomas86,BLOOM200555}.
At the other hand, it is undecidable for a context-free grammar whether it generates a dense language,
hence the isomorphism problem of context-free orderings in general is undecidable~\cite{ESIK2011107}.

Algorithms that work for the well-ordered case can in many cases be ``tweaked'' somehow
to make them work for the scattered case as well:
e.g. it is decidable whether $(L,<_\ell)$ is well-ordered or scattered~\cite{10.1007/978-3-642-22321-1_19}
and the two algorithms are quite similar.

In this paper we continue to explore the boundary of decidability of the isomorphism problem of context-free orderings.
We show that if the order type $o(L)$ of a context-free language $L$ is known to have the form $\omega\times k+n$ for some integers $k$ and $n$, then
$k$ and $n$ can be effectively computed. The main building block for proving this is a decision procedure for
solving $o(L(X))\mathop{=}\limits^{?}\omega$ for each nonterminal $X$, and a recursive algorithm that terminates
for languages of order type less than $\omega^2$.

\section{Notation}
A \emph{linear ordering} is a pair $(Q,<)$, where $Q$ is some set and the $<$ is a transitive, antisymmetric and connex 
(that is, for each $x,y\in Q$ exactly one of $x<y$, $y<x$ or $x=y$ holds) binary relation on $Q$.
The pair $(Q,<)$ is also written simply $Q$ if the ordering is clear from the context.
A (necessarily injective) function $h: Q_1\to Q_2$, where $(Q_1,<_1)$ and $(Q_2,<_2)$ are some linear orderings,
is called an \emph{(order) embedding}
if for each $x,y\in Q_1$, $x<_1 y$ implies $h(x) <_2 h(y)$.
If $h$ is also surjective, $h$ is an \emph{isomorphism}, in which case the two orderings are \emph{isomorphic}.
An isomorphism class is called an \emph{order type}. The order type of the linear ordering $Q$ is denoted by $o(Q)$.

For example, the class of all linear orderings contain all the finite linear orderings
and the orderings of the integers ($\mathbb{Z}$), the positive integers ($\mathbb{N}$) and the negative integers ($\mathbb{N}_{-}$) whose order type is denoted $\zeta$, $\omega$ and $-\omega$ respectively.
Order types of the finite sets are denoted by their cardinality, and $[n]$ denotes $\{1,\ldots,n\}$ for each $n\geq 0$, ordered in the standard way.

The ordered sum $\sum_{x\in Q} Q_x$, where $Q$ is some linear ordering and for each $x \in Q$, $Q_x$ is a linear ordering,
is defined as the ordering with domain $\{(x,q):x\in Q,q\in Q_x\}$ and ordering relation
$(x,q)<(y,p)$ if and only if either $x<y$, or $x=y$ and $q<p$ in the respective $Q_x$. 
If each $Q_x$ has the same order type $o_1$ and $Q$ has order type $o_2$, then the above sum has order type $o_1\times o_2$.
If $Q=[2]$, then the sum is usally written as $Q_1+Q_2$.

If $(Q,<)$ is a linear ordering and $Q'\subseteq Q$, we also write $(Q',<)$ for the subordering of $(Q,<)$, that is, to ease notation we also use $<$ for the restriction of $<$ to $Q'$.

A linear ordering $(Q,<)$ is called \emph{dense} if it has at least two elements and for each $x,y\in Q$ where $x<y$ there exists a $z\in Q$ such that $x<z<y$.
A linear ordering is \emph{scattered} if no dense ordering can be embedded into it.
It is well-known that every scattered sum of scattered linear orderings is scattered, and any finite union of scattered linear orderings is scattered.
A linear ordering is called a \emph{well-ordering} if it has no subordering of type $-\omega$. Clearly, any well-ordering is scattered. Since isomorphism preserves well-orderedness
or scatteredness, we can call an order type well-ordered or scattered as well,
or say that an order type embeds into another.
The well-ordered order types are called \emph{ordinals}.
For any set $\Omega$ of ordinals, $(\Omega,<)$ is well-ordered by the relation $o_1<o_2~\Leftrightarrow~\hbox{``}o_1\hbox{ can be embedded injectively into }o_2$ but not vice versa''.
The principle of well-founded induction can be formulated as follows. Assume $P$ is a property of ordinals such that for any ordinal $o$, if $P$ holds for all ordinals
smaller than $o$, then $P$ holds for $o$. Then $P$ holds for all the ordinals.

For standard notions and useful facts about linear orderings see e.g.~\cite{rosenstein} or~\cite{jalex}.

Hausdorff classified the countable scattered linear orderings with respect to their rank.
We will use the definition of the Hausdorff rank from \cite{10.1007/978-3-642-29344-3_25},
which slightly differs from the original one (in which $H_0$ contains only the empty ordering and the singletons,
and the classes $H_\alpha$ are not required to be closed under finite sum, see e.g.~\cite{rosenstein}).
For each countable ordinal $\alpha$, we define the class $H_\alpha$ of countable linear orderings as follows.
$H_0$ consists of all finite linear orderings, and
when $\alpha> 0$ is a countable ordinal,
then $H_\alpha$ is the least class of linear orderings closed under finite ordered sum and isomorphism
which contains all linear orderings of the form $\sum_{i\in\mathbb{Z}} Q_i$,
where each $Q_i$ is in $H_{\beta_i}$ for some $\beta_i < \alpha$.

By Hausdorff's theorem, a countable linear order $Q$ is scattered if and only if it belongs to $H_\alpha$ for some countable ordinal $\alpha$.
The \emph{rank} $r(Q)$ of a countable scattered linear ordering is the least ordinal $\alpha$ with $Q \in H_\alpha$.

As an example, $\omega$, $\zeta$, $-\omega$ and $\omega+\zeta$ or any finite sum of the form $\mathop\sum\limits_{i\in[n]}o_i$ with $o_i\in\{\omega, -\omega,1\}$ for each $i\in[n]$
each have rank $1$ while $(\omega+\zeta)\times\omega$ has rank $2$.

Let $\Sigma$ be an alphabet (a finite nonempty set) and let $\Sigma^*$ ($\Sigma^+$, resp) stand for the set of all (all nonempty, resp)
finite words over $\Sigma$, $\varepsilon$ for the empty word, $|u|$ for the length of the word $u$, $u\cdot v$ or simply $uv$ for the concatenation of $u$ and $v$. A \emph{language} is an arbitrary subset $L$ of $\Sigma^*$. 
We assume that each alphabet is equipped by some (total) linear order.
Two (strict) partial orderings, the strict ordering $<_s$ and the prefix ordering $<_p$ are defined over $\Sigma^*$ as
follows:
\begin{itemize}
	\item $u <_s v$ if and only if $u = u_1au_2$ and $v = u_1bv_2$ for some words $u_1, u_2, v_2\in\Sigma^*$ and letters $a < b$,
	\item $u <_p v$ if and only if $v = uw$ for some nonempty word $w\in\Sigma^*$.
\end{itemize} 
The union of these partial orderings is the lexicographical ordering $<_\ell = <_s \cup <_p$.
We call the language $L$ well-ordered or scattered,
if $(L,<_\ell)$ has the appropriate property and we define the rank $r(L)$ of a scattered language $L$
as $r(L,<_\ell)$. The order type $o(L)$ of a language $L$ is the order type of $(L,<_\ell)$.
For example, if $a<b$, then $o\Bigl(\{a^kb:k\geq 0\}\Bigr)=-\omega$ and $o\Bigl(\{(bb)^ka:k\geq 0\}\Bigr)=\omega$.

When $\varrho$ is a relation over words (like $<_\ell$ or $<_s$), we write $K\varrho L$ if $u\varrho v$ for each word $u\in K$ and $v\in L$.

An \emph{$\omega$-word} over $\Sigma$ is an $\omega$-sequence $a_1a_2\ldots$ of letters $a_i\in\Sigma$. The set of all $\omega$-words over $\Sigma$ is denoted $\Sigma^\omega$.
The orderings $<_\ell$ and $<_p$ are extended to $\omega$-words. An $\omega$-word $w$ is called \emph{regular} if $w=uv^\omega=uvvvv\ldots$ for some finite words
$u\in\Sigma^*$ and $v\in\Sigma^+$. When $w$ is a (finite or $\omega$-) word over $\Sigma$ and $L\subseteq\Sigma^*$ is a language, then $L_{<w}$
stands for the language $\{u\in L:u<w\}$. Notions like $L_{\geq w}$, $L_{<_sw}$ are also used as well, with the analogous semantics.

A \emph{context-free grammar} is a tuple $G=(N,\Sigma,P,S)$, where $N$ is the alphabet of the \emph{nonterminal symbols}, $\Sigma$ is the alphabet of \emph{terminal symbols} (or \emph{letters}) which is disjoint from $N$, $S\in N$ is the \emph{start symbol} and $P$ is a finite set of \emph{productions} of the form $A\to\alpha$, where $A\in N$ and $\alpha$ is a \emph{sentential form}, that is, $\alpha = X_1X_2\ldots X_k$ for some $k\geq 0$ and $X_1,\ldots,X_k\in N\cup \Sigma$. The derivation relations $\Rightarrow$, $\Rightarrow_\ell$, $\Rightarrow^*$ and $\Rightarrow_\ell^*$ are defined as usual
(where the subscript $\ell$ stands for ``leftmost'').
The \emph{language generated by} a grammar $G$ is defined as $L(G) = \{u\in\Sigma^* ~|~ S\Rightarrow^*u\}$.
Languages generated by some context-free grammar are called \emph{context-free languages}. 
For any set $\Delta$ of sentential forms, the language generated by $\Delta$ is $L(\Delta) = \{u \in \Sigma^* ~|~ \alpha\Rightarrow^* u\hbox{ for some } \alpha\in\Delta\}$. 
As a shorthand, we define $o(\Delta)$ as $o(L(\Delta))$.
A language $L$ is \emph{prefix} (or prefix-free) if there are no words $u, v \in L$ with $u <_p v$.
A context-free grammar $G=(N,\Sigma,P,S)$ is called a \emph{prefix} grammar if $L(A)$ is a prefix language for each $A\in N$.
When $X,Y\in N\cup\Sigma$ are symbols of a grammar $G$, we write $Y\preceq X$ if $X\Rightarrow^*uYv$ for some words $u$ and $v$; $X\approx Y$ if $X\preceq Y$ and $Y\preceq X$ both hold;
and $Y\prec X$ if $Y\preceq X$ but not $X\preceq Y$. A production of the form $X\to X_1\ldots X_n$ with $X_i\prec X$
for each $i\in[n]$ is called an \emph{escaping production}.

A \emph{regular language} over $\Sigma$ is one which can be built up from the singleton languages $\{a\}$, $a\in\Sigma$
and the empty language $\emptyset$ with finitely many applications of taking (finite) union,
concatenation $KL=\{uv:u\in K,v\in L\}$ and iteration $K^*=\{u_1\ldots u_n:n\geq 0,u_i\in K\}$. 
For standard notions on regular and context-free languages the reader is referred to any standard textbook, such as~\cite{Hopcroft+Ullman/79/Introduction}.

Linear orderings which are isomorphic to the lexicographic ordering of some context-free (regular, resp.) language
are called \emph{context-free (regular, resp.) orderings}.

\section{If $o(L)<\omega^2$, then $o(L)$ is computable}
	
%
	
%

In this section we consider a context-free grammar $G=(N,\Sigma,P,S)$ which contains no left recursive nonterminals,
and generates a(n infinite) scattered language such that for each $X\in N$, $X$ is usable and $L(X)$ is an infinite language of nonempty words,
moreover, each nonterminal but possibly $S$ is recursive and there is no left recursive nonterminal (that is, $X\Rightarrow^+uXv$ implies $u\neq\varepsilon$).
Any context-free grammar can effectively be transformed into such a form, see e.g.~\cite{GelleIvanTCS}.

The section is broken into two parts: the first subsection contains some technical decidability lemmas, while the second
one contains the main result that if we know that $o(L)<\omega^2$ for a well-ordered context-free language $L$
(so that the Hausdorff-rank of $L$ is at most one), then $o(L)$ is effectively computable.
This computability is already known for so-called ordinal grammars which generate a well-ordered language such that
for each nonterminal $X$, $L(X)$ is a prefix language~\cite{GelleIvanTCS}.
However, this is a serious restriction and makes many proofs easier since if $K$ is a prefix language,
then $o(KL)=o(L)\times o(K)$ for any language $L$. This does not hold for arbitrary languages since e.g.
$o(a^*)=\omega$, $o(b)=1$ and $o(a^*b)=-\omega$, $o(a^*a^*)=\omega$, $o((ac)^*)=\omega$ and $o((ac)^*(b+ab))=\omega+(-\omega)$
so a more careful case analysis is required.
The reader is advised to skip
the first subsection at first read -- the proofs of the second part extensively refer to the lemmas of the first part.

\subsection{Some technical lemmas}

For an $\omega$-word $w$, let $\mathbf{Pref}(w)\subseteq\Sigma^*$ stand for the set of the finite prefixes of $w$.
For each $u=a_1\ldots a_k\in\Sigma^*$ and $v=b_1\ldots b_t\in\Sigma^+$
let $M_{u,v}$ denote the automaton (without specified final states)
depicted in Figure~\ref{fig-uv-automata}.
	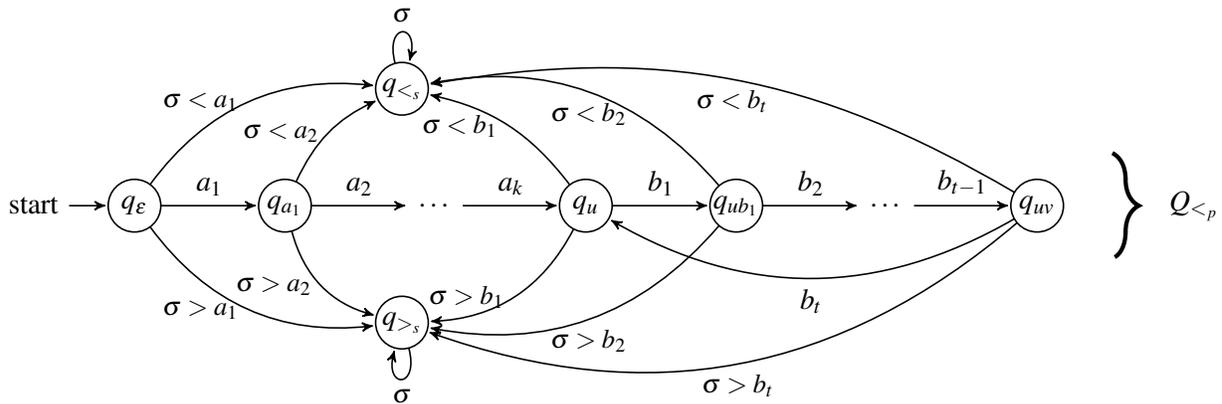
\begin{figure}[H]
	\centering
	\begin{tikzpicture}[-,>=stealth',shorten >=1pt,auto,node distance=2cm, semithick]
	\tikzstyle{every state}=[draw=black,text=black, inner sep = 0cm, outer sep = 0cm, minimum size = 0.7 cm]
	
	\node[initial,state] (e)               {$q_\epsilon$};
	\node[state]         (a1) [right of=e] {$q_{a_1}$};
	\node		         (d1) [right of=a1] {$\cdots$};
	\node[state]         (u)  [right of=d1] {$q_{u}$};
	\node[state]         (ub1) [right of=u]       {$q_{ub_1}$};
	\node		         (d2) [right of=ub1] {$\cdots$};
	\node[state]         (uv) [right of=d2]       {$q_{uv}$};
	
	\node[state]         (ss) [above right = 1.5cm of a1]       {$q_{<_s}$};
	\node[state]         (sb) [below right = 1.5cm of a1]       {$q_{>_s}$};
	
	\draw [decorate,decoration={brace,amplitude=10pt},yshift=0pt, ultra thick]
	(13,0.7) -- (13,-0.7) node [black,midway,xshift=0.6cm] {$Q_{<_p}$};
	
	\path[->]
	(e) edge node {$a_1$} (a1)
	(a1) edge node {$a_2$} (d1)
	(d1) edge node {$a_k$} (u)
	(u) edge node {$b_1$} (ub1)
	(ub1) edge node {$b_2$} (d2)
	(d2) edge node {$b_{t-1}$} (uv)
	(uv) edge[bend left=30] node {$b_t$} (u)

	(e) edge[bend left = 30, left] node {\small $\sigma<a_1$} (ss)
	(a1) edge[bend left = 20, left] node {\small $\sigma<a_2$} (ss)
	(u) edge[bend left = -20, left] node {\small $\sigma<b_1$} (ss)
	(ub1) edge[bend left = -30, below] node {\small $\sigma<b_2$} (ss)
	(uv) edge[bend left = -20,below] node {\small $\sigma<b_t$} (ss)
	
	(e) edge[bend left = -30, left] node {\small $\sigma>a_1$} (sb)
	(a1) edge[bend left = -30, left] node {\small $\sigma>a_2$} (sb)
	(u) edge[bend left = 30,left] node {\small $\sigma>b_1$} (sb)
	(ub1) edge[bend left = 30, below] node {\small $\sigma>b_2$} (sb)
	(uv) edge[bend left = 30,below] node {\small $\sigma>b_t$} (sb)
	(ss) edge[loop above] node {\small $\sigma$} (ss)
	(sb) edge[loop below] node {\small $\sigma$} (sb)
	;
	\end{tikzpicture}
	\caption{The automaton $M_{u,v}$}
	\label{fig-uv-automata}
\end{figure}

\begin{proposition}
	\label{prop-rajzos}
	For any words $u\in\Sigma^*$ and $v\in\Sigma^+$, the languages $\mathbf{Pref}(uv^\omega)$, $\{w\in\Sigma^*:w<_suv^\omega\}$ and $\{w\in\Sigma^*:uv^\omega<_sw\}$ are regular.
\end{proposition}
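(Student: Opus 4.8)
The plan is to read the three languages off the automaton $M_{u,v}$ of Figure~\ref{fig-uv-automata} by determining, for a finite input word $w$, which state the run of $M_{u,v}$ on $w$ ends in. Write $z=uv^\omega=z_1z_2z_3\cdots$ for the $\omega$-word in question. First I would check that $M_{u,v}$ is a complete deterministic automaton: from every state indexed by a finite prefix $p$ of $z$, reading the ``expected'' letter $z_{|p|+1}$ stays on the spine $q_\varepsilon,q_{a_1},\ldots,q_u,q_{ub_1},\ldots,q_{uv}$ (with the back edge from $q_{uv}$ to $q_u$ accounting for the fact that after a full period the continuation of $z$ is again $v^\omega$), whereas reading any letter strictly smaller (resp.\ larger) than $z_{|p|+1}$ sends the run to the absorbing state $q_{<_s}$ (resp.\ $q_{>_s}$), and $q_{<_s},q_{>_s}$ loop on every letter. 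The degenerate cases $u=\varepsilon$ and $|v|=1$ only identify some of the spine states with one another and cause no trouble.

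Next I would prove by induction on $|w|$ the following invariant for the run of $M_{u,v}$ on a finite $w\in\Sigma^*$: the run ends in the spine state indexed by $w$ iff $w$ is a finite prefix of $z$; it ends in $q_{<_s}$ iff $w<_s z$; and it ends in $q_{>_s}$ iff $z<_s w$. The induction step is immediate from the transition description above -- while the run stays on the spine the input read so far is exactly the corresponding prefix of $z$; the first letter that disagrees with $z$ moves the run to $q_{<_s}$ or $q_{>_s}$ according to whether it is smaller or larger than the expected letter, which is precisely when $w$ becomes $<_s z$ or $>_s z$; and once in an absorbing state the run stays there, matching the fact that $<_s$ is determined by the first disagreeing position. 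The point that makes the three cases of the invariant exhaustive and mutually exclusive is that $z$ is infinite, so a finite $w$ can never have $z$ as a prefix; hence for every $w\in\Sigma^*$ exactly one of ``$w$ is a finite prefix of $z$'', ``$w<_s z$'', ``$z<_s w$'' holds.

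It then follows that $M_{u,v}$ with set of final states $Q_{<_p}$ (the spine states) recognizes exactly $\mathbf{Pref}(uv^\omega)$, with single final state $q_{<_s}$ recognizes $\{w\in\Sigma^*:w<_suv^\omega\}$, and with single final state $q_{>_s}$ recognizes $\{w\in\Sigma^*:uv^\omega<_sw\}$, so all three languages are regular. The only genuinely delicate point is the bookkeeping in the induction invariant: one must make precise the correspondence between finite prefixes of $uv^\omega$ and spine states once the period starts repeating (a prefix $uv^jw$ with $w$ a proper prefix of $v$ of length $m$ corresponds to the state reached after reading $ub_1\cdots b_m$), and then verify that the transition function is total in the degenerate cases. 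An automaton-free alternative would instead write $\mathbf{Pref}(uv^\omega)$ as the set of prefixes of $u$ together with $u\,v^*$ times the set of prefixes of $v$, and express the other two languages as $D\cdot\Sigma^*$ where $D$ is the (regular, being a finite union of sets of the form $u\,v^*F$ with $F$ finite) set of minimal words deviating below, resp.\ above, $uv^\omega$; but since the automaton is already at hand, reading the answer off it is the shorter route.
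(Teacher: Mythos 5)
Your proof is correct and follows the same route as the paper: it recognizes the three languages with the automaton $M_{u,v}$ by choosing $Q_{<_p}$, $\{q_{<_s}\}$, or $\{q_{>_s}\}$ as final states, with your induction invariant merely spelling out what the paper leaves implicit.
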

\begin{proof}
	Let $u=a_1\ldots a_k$ and $v=b_1\ldots b_t$ for the integers $k\geq 0$, $t>0$ and letters $a_i,b_j$ and consider the automaton $M_{u,v}$ given in Figure~\ref{fig-uv-automata}.
	Then, by setting $q_{<_s}$ ($q_{>_s}$, respectively) for the unique accepting state we recognize $\{w\in\Sigma^*:w<_suv^\omega\}$ ($\{w\in\Sigma^*:uv^\omega<_sw\}$, resp.),
	and by setting $Q_{<p}$ as the set of final states we recognize $\mathbf{Pref}(uv^\omega)$.
\end{proof}

\begin{lemma}
\label{lem-sequence}
	For each sentential form $\alpha$ with $L(\alpha)$ being infinite, we can generate a sequence $w_0,w_1,\ldots \in L(\alpha)$ and a regular word $w\in\Sigma^\omega$ satisfying one of the following cases:
	\begin{itemize}
		\item[i)] $w_1<_sw_2<_s\ldots$ and $w=\mathop\bigvee\limits_{i\geq 0}w_i$
		\item[ii)] $w_1>_sw_2>_s\ldots$ and $w=\mathop\bigwedge\limits_{i\geq 0}w_i$
		\item[iii)] $w_1<_pw_2<_p\ldots$ and $w=\mathop\bigvee\limits_{i\geq 0}w_i$
	\end{itemize}
\end{lemma}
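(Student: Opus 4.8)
The goal is to produce, from an arbitrary infinite $L(\alpha)$, a monotone sequence of words together with its limit (a regular $\omega$-word). The natural approach is to build the sequence one letter at a time, always staying inside the set of ``prefixes that can be extended to infinitely many words of $L(\alpha)$'', and to force regularity of the limit by a pumping/pigeonhole argument on the finitely many sentential forms reachable in a derivation.

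\emph{Step 1: the infinite-extension tree.} First I would consider the set $T = \{p \in \Sigma^* : p <_p w \text{ or } p = w \text{ for infinitely many } w \in L(\alpha)\}$ of prefixes of infinitely many words of the language, together with $\varepsilon$ (which is in $T$ since $L(\alpha)$ is infinite). This $T$ is closed under taking prefixes, so it is a tree; since $L(\alpha)\subseteq\Sigma^*$ and $\Sigma$ is finite, every node of $T$ has finitely many children. By König's lemma $T$ is either finite or has an infinite branch. If $T$ is infinite it has an infinite branch $w = a_1a_2\ldots$, and the finite prefixes $w_i = a_1\ldots a_i$ form a $<_p$-increasing sequence; for each $i$ we may pick $w_i'\in L(\alpha)$ with $w_i\le_p w_i'$, and thinning ensures $w_1'<_p w_2'<_p\cdots$, giving case (iii) — but we still owe regularity of $w$. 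If $T$ is finite, then there is a node $p\in T$ all of whose proper extensions in $L(\alpha)$ are pairwise $<_s$-incomparable only finitely often except through one ``direction''; more precisely, among the (finitely many) words of $L(\alpha)$ extending $p$ but not lying in $T$ beyond $p$, infinitely many must agree with $p$ on no further letter, so they split at position $|p|+1$ into finitely many classes by their next letter, and one such class $C$ is infinite. The words in $C$ all begin $pa$ for a fixed letter $a$ but $pa\notin T$, a contradiction — so in fact I must be more careful: the right dichotomy is that either some branch is infinite (case iii), or for some prefix $p$ there are infinitely many words of $L(\alpha)$ of the form $p\sigma u$ with $\sigma$ ranging and, after pigeonholing, infinitely many with the \emph{same} first differing letter, which then forces a strictly $<_s$-monotone subsequence (case i or ii).

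\emph{Step 2: getting regularity via the grammar.} The sequences produced above need not have a regular limit if we only use the tree structure; this is where I expect the main obstacle. To fix it I would not argue about $L(\alpha)$ abstractly but about derivations. Since $G$ has no left-recursive nonterminals and each nonterminal is recursive with $L(X)$ infinite, a leftmost derivation of a long word from $\alpha$ must, by pigeonhole on the finite set $N$, revisit some nonterminal $X$ in a configuration $\alpha \Rightarrow_\ell^* x X \beta \Rightarrow_\ell^* x y X \beta' \beta \Rightarrow_\ell^* \cdots$ with $y\neq\varepsilon$ (no left recursion gives $y\neq\varepsilon$ only when $x$ grows; here the relevant pump lies strictly left of $X$ or is generated by $X$ itself). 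Iterating the pump produces, depending on whether the pumped block $y$ is prepended to a fixed tail or interleaved, either prefixes $x, xy, xyy,\ldots$ converging to $x y^\omega$ (case iii, regular), or words $x y^i z$ all sharing the growing common prefix region and differing from a fixed $\omega$-word at a bounded position — and comparing consecutive ones with $<_s$ yields $xy^\omega$ again as $\bigvee$ or $\bigwedge$. In every branch the limit is of the form $x y^\omega$, hence regular, and the finitely many $w_i$ are obtained by successive pumping, so they are effectively generable.

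\emph{Step 3: case selection and effectiveness.} Finally I would organize the above into the trichotomy: if the pump sits entirely to the left of the controlling nonterminal and keeps extending a common prefix, we are in case (iii) with $w=\bigvee_i w_i$; if instead the infinitely many pumped words have a bounded-position first difference that moves ``upward'' in the alphabet order we get case (i) with $w=\bigvee_i w_i$, and if ``downward'', case (ii) with $w=\bigwedge_i w_i$. Which case occurs is decidable by inspecting the finitely many possible pump shapes $(x,y)$ obtainable from leftmost derivations of bounded length (bounded by $|N|$ plus the lengths of right-hand sides), and for each candidate pump the comparison of $xy^i z$ versus $xy^{i+1}z$ under $<_s$ is a finite check; the limit word $w=xy^\omega$ is then read off directly, and Proposition~\ref{prop-rajzos} guarantees the sets we would want to manipulate downstream (prefixes of $w$, words $<_s w$, words $>_s w$) are regular. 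I expect the genuine difficulty to be Step~2 — namely ensuring that the \emph{same} pump is used cofinally so that the limit is a single regular word rather than a limit of unrelated prefixes — and I would handle it by choosing, once and for all, a nonterminal $X$ with $X\approx X$ lying on a derivation from $\alpha$ with an infinite ``escaping'' frontier to its right (or a nonempty pump to its left), which exists precisely because $L(\alpha)$ is infinite and $G$ is in the assumed normal form.
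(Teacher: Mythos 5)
Your Steps 2--3 are in spirit the paper's proof (pump a word of $L(\alpha)$ and do a case analysis on how consecutive pumped words compare), but the sketch has a genuine gap exactly where the real work lies. Context-free pumping gives words $w_n=u_1u_2^nu_3u_4^nu_5$ with two pumped blocks, not $xy^iz$, and to place the sequence into case i), ii) or iii) you must compare $u_3u_4^{n}u_5$ with $u_2u_3u_4^{n+1}u_5$. When $u_2u_3u_4^\omega\neq u_3u_4^\omega$ the first difference sits at a bounded position and your picture (a fixed first-difference letter going up or down) works uniformly from some threshold $n_0$ on. But when $u_2u_3u_4^\omega=u_3u_4^\omega$, i.e.\ the left pump is absorbed into the period of the right pump, the first difference between consecutive words drifts to infinity with $n$; your claim of a ``bounded-position first difference'' is false there, and the trichotomy is instead decided by comparing the tail $u_5$ with $xu_4u_5$, where $x$ is the suffix of $u_4^N$ satisfying $u_3u_4^Nx=u_2u_3u_4^N$ (this is the paper's Case 3 with its three subcases, and it is the case responsible for examples like $o((ac)^*(b+ab))$). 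Your proposal never identifies this dichotomy, so the central claim that consecutive terms are eventually related by a fixed one of $<_s$, $>_s$, $<_p$ with regular limit is asserted rather than proved. Relatedly, your effectiveness argument (``compare $w_i$ with $w_{i+1}$, a finite check'') is unsound as stated: the eventual relation need not hold for small $i$, which is why the paper starts the sequence at an index $n_0$ (or $N$) and decides which case applies by first testing whether $u_4=\varepsilon$, then whether $u_3u_4^\omega=u_2u_3u_4^\omega$ (equality of regular $\omega$-words), and only then searching for a witnessing $n_0$.

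Two smaller points. Step 1 (the K\H{o}nig-tree argument) is a dead end you half-acknowledge, and its dichotomy is confused: the tree of prefixes occurring in infinitely many words of $L(\alpha)$ is never finite when $L(\alpha)$ is infinite (every node has a child by pigeonhole), and in any event it cannot yield regularity of the limit, so it should simply be dropped. Second, your appeal to the absence of left recursion to force a nonempty left pump is unnecessary: the paper's proof handles $u_2=\varepsilon$ directly, in which case the limit is $u_1u_3u_4^\omega$ rather than $u_1u_2^\omega$ --- another reason the blanket claim ``the limit is always $xy^\omega$ for the pumped block $y$'' needs the case analysis you skipped.
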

\begin{proof}
By the pumping lemma of context-free languages, as $L(\alpha)$ is infinite, one can generate a word $u\in L(\alpha)$ 
and a partition $u = u_1u_2u_3u_4u_5$ such that $|u_2u_4|\geq 1$ and for each $n\geq 0$, the word $u_1u_2^nu_3u_4^nu_5$ is in $L(\alpha)$.  

Based on the relative order of the five subwords we consider the following cases:
\begin{enumerate}
	\item \textbf{There exists an $\boldsymbol{n_0}$ such that $\boldsymbol{u_3u_4^{n_0} <_s u_2u_3u_4^{n_0}}$}
	
	Let us define the sequence as $w_n = u_1u_2^{n_0+n}u_3u_4^{n_0+n}u_5$.
	Let us fix $n$ and let $m$ denote $m=n_0+n$.
	Here we get that $w_n <_s w_{n+1}$ if and only if $u_3u_4^{m}u_5 <_s u_2u_3u_4^{m+1}u_5$, which is true since 
	$u_3u_4^{n_0}<_su_2u_3u_4^{n_0}$ implies $u_3u_4^{n_0}x<_su_2u_3u_4^{n_0}y$ for any $x,y\in\Sigma^*$, thus in particular
	for $x=u_4^nu_5$ and $y=u_4^{n+1}u_5$ as well.
	
	Hence the type of the sequence is of i) and the supremum is $w = \mathop\bigvee\limits_{i\geq 0}w_i = u_1u_2^\omega$.
	(Observe that $u_2\neq\varepsilon$ as that could not satisfy $u_3u_4^{n_0}<_su_2u_3u_4^{n_0}$).
	
	\item \textbf{There exists an $\boldsymbol{n_0}$ such that $\boldsymbol{u_2u_3u_4^{n_0} <_s u_3u_4^{n_0}}$}
	
	Again, $u_2$ cannot be the empty word.	
	Similarly to the first case, let us define the sequence as $w_n = u_1u_2^{n_0+n}u_3u_4^{n_0+n}u_5$ and fix $n$ and let $m$ denote $m=n_0+n$.
	
	Here we get that $w_{n+1} <_s w_{n}$ if and only if $u_2u_3u_4^{m+1}u_5 <_s u_3u_4^{m}u_5$, which is true since 
	$u_2u_3u_4^{n_0}<_su_3u_4^{n_0}$ implies $u_2u_3u_4^{n_0}x<_su_3u_4^{n_0}y$ for any $x,y\in\Sigma^*$, thus in particular for $x=u_4^{n+1}u_5$ and $y=u_4^{n}u_5$ as well. So we get that the type of the sequence is ii) (with order type of $-\omega$) and the infimum is $w =\mathop\bigwedge\limits_{i\geq 0}w_i = u_1u_2^\omega$.
	
	\item \textbf{For each $\boldsymbol{n}$ it holds that $\boldsymbol{u_3u_4^n \leq_p u_2u_3u_4^{n}}$ and $\boldsymbol{u_4\neq \varepsilon}$}
		 
	 In this case $u_3u_4^\omega = u_2u_3u_4^\omega$.
	 
	 Let us fix $N = \left\lceil\frac{|u_2|}{|u_4|}\right\rceil+1$ and the sequence as $w_n=u_1u_2^{N+n}u_3u_4^{N+n}u_5$.
	 Furthermore, let $x\in\Sigma^*$ be the unique word with $u_3u_4^Nx=u_2u_3u_4^{N}$.
	 That is, $x$ is the unique suffix of $u_4^{N}$ of length $|u_2|$.
	 Hence, for any $n\geq N$ we also have $u_3u_4^nx=u_2u_3u_4^{n}$ for the same $x$ (as we know that $u_3u_4^n\leq_pu_2u_3u_4^{n}$, their length differ by $|u_2|$, and the latter word
	 ends with $u_4^{N}$).
	 
	 We have three subcases: 
	 \begin{enumerate}
	 	
	 	\item \textbf{It holds that $\boldsymbol{u_3u_4^Nu_5 <_s u_2u_3u_4^{N+1}u_5}$ }

	 	First observe that as $u_2u_3u_4^{N+1}u_5=u_3u_4^Nxu_4u_5$, the assumption of the subcase yields $u_5<_sxu_4u_5$.
	 	Then for each $m\geq N$ we have $u_3u_4^{m}u_5<_su_3u_4^mxu_4u_5=u_2u_3u_4^mu_4u_5=u_2u_3u_4^{m+1}u_5$,
	 	implying $u_1u_2^mu_3u_4^mu_5<_su_1u_2^{m+1}u_3u_4^{m+1}u_5$. So the sequence is of type i) and its supremum is either $u_1u_2^\omega$ (if $u_2$ is nonempty) or $u_1u_3u_4^\omega$ (otherwise).
	 	
	 	\item \textbf{It holds that $\boldsymbol{u_2u_3u_4^{N+1}u_5 <_s u_3u_4^{N}u_5}$ }
	 
		First since $u_2u_3u_4^{N+1}u_5$ can be written as $u_3u_4^Nxu_4u_5$, the assumption of the subcase yields $xu_4u_5 <_s u_5$.
		Then for each $m\geq N$ we have $u_3u_4^mxu_4u_5=u_2u_3u_4^mu_4u_5=u_2u_3u_4^{m+1}u_5 <_s u_3u_4^mu_5$, so we get a sequence of words such that $u_1u_2^{m+1}u_3u_4^{m+1}u_5 <_s u_1u_2^mu_3u_4^mu_5$. Hence, the sequence is of type ii) with the infimum of either $u_1u_2^\omega$ (if $u_2$ is nonempty) or $u_1u_3u_4^\omega$ (otherwise).
		
		\item \textbf{It holds that $\boldsymbol{u_3u_4^Nu_5 <_p u_2u_3u_4^{N+1}u_5}$ }
	 
		In this case for each $m\leq N$ we have $u_3u_4^mu_5 <_p u_2u_3u_4^{m+1}u_5$, so we get an ascending prefix chain. Hence the order type of this sequence of iii) is $\omega$ with the supremum of
		either $u_1u_2^\omega$ (if $u_2$ is nonempty) or $u_1u_3u_4^\omega$ (otherwise).		
	 \end{enumerate}
 	\item \textbf{It holds that $\boldsymbol{u_3 <_p u_2u_3}$ and $\boldsymbol{u_4=\varepsilon}$}
 	
 		Note that $u_2$ cannot be empty in this case. 
 		
 		
 		We have three subcases:
 		\begin{enumerate}
 			\item \textbf{It holds that} $\boldsymbol{u_3u_5<_su_2u_3u_5}$
 			
 			In this case for each $n\leq 0$ we have $u_1u_2^nu_3u_5 <_s u_1u_2^{n+1}u_3u_5$ iff $u_3u_5<_s u_2u_3u_5$ which is the assumption of this subcase. So we get that the sequence type is i) and the supremum is $w = u_1u_2^\omega$.
 			
 			\item \textbf{It holds that} $\boldsymbol{u_2u_3u_5<_su_3u_5}$
 			
 			Here, similarly to the previous case for each $n\leq 0$ we have $u_1u_2^{n+1}u_3u_5 <_s u_1u_2^nu_3u_5$ iff $u_2u_3u_5 <_s u_3u_5$, which is implied by the assumption. Hence we have an infinite descending chain with the sequence type of ii) and infimum $w=u_1u_2^\omega$. 
 			
 			\item \textbf{It holds that} $\boldsymbol{u_3u_5<_p u_2u_3u_5}$
 			
 			In the last case since $u_3u_5<_p u_2u_3u_5$, for each $n\leq 0$ we have $u_1u_2^nu_3u_5 <_p u_1u_2^{n+1}u_3u_5$ which is a prefix chain with sequence type of iii) and supremum $w = u_1u_2^\omega$.
 			
 		\end{enumerate}
\end{enumerate}
Observe that it is also decidable which (sub)case applies: first we check for the condition of Case 4 (which is clearly decidable). Then, if that condition does not hold, we check whether $u_3u_4^\omega=u_2u_3u_4^\omega$ holds.
As equality of regular words is decidable, this can be done, and if they are the same, then we again have three sub-conditions concerning finite words. Otherwise, if $u_3u_4^\omega\neq u_2u_3u_4^\omega$, then
either $u_3u_4^\omega<_s u_2u_3u_4^\omega$, that is, $u_3u_4^{n_0}<_s u_2u_3u_4^{n_0}$ for some $n_0$, or $u_2u_3u_4^\omega<_s u_3u_4^\omega$, in which case $u_2u_3u_4^{n_0}<_su_3u_4^{n_0}$ for some $n_0$.
But since we know that one of these two cases has to hold, we only have to iterate through all the integers $n$ and compare $u_3u_4^n$ with $u_2u_3u_4^n$ and eventually there will be an $n$ for which these two become comparable by $<_s$.
(A more efficient algorithm also exists, e.g. by analyzing the direct product automaton $M_{u_3,u_4}\times M_{u_2u_3,u_4}$.)
\end{proof}

We recall the following characterizations of those context-free grammars generating a scattered (or well-ordered) language from~\cite{DBLP:conf/fics/BloomE09}:
\begin{theorem}[\cite{DBLP:conf/fics/BloomE09}]
\label{thm-esik}
	Assume $G=(N,\Sigma,P,S)$ is a context-free grammar such that each nonterminal is usable, $\varepsilon$-free and there are no left-recursive nonterminals.
	Then
	\begin{itemize}
		\item $L(G)$ is scattered if and only if for each recursive nonterminal $X$ there exists a word $u_X\in\Sigma^+$ such that whenever $X\Rightarrow^+uX\alpha$ for some $u\in\Sigma^*$, $\alpha\in(N\cup\Sigma)^*$, then $u\in u_X^+$.
		\item If $L(G)$ is scattered and $X\approx X'$ are recursive nonterminals, then there exists a word $u_{X,X'}<_p u_X$ such that whenever $X\Rightarrow^+uX'\alpha$ for some $u\in\Sigma^*$,
			$\alpha\in(N\cup\Sigma)^*$, then $u\in u_X^*u_{X,X'}$.
		\item $L(G)$ is well-ordered if and only if it is scattered and for each recursive nonterminal $X$, $L(X)<_\ell u_X^\omega$.
	\end{itemize}
	Moreover, for each $X,X'$ the words $u_X$ and $u_{X,X'}$ are effectively computable and it is decidable whether $L(X)$ is scattered, or well-ordered.
\end{theorem}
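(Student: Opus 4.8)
The plan is to reduce scatteredness and well-orderedness of $L(G)$ to combinatorial conditions on the \emph{left-return words} of the recursive nonterminals, a left-return word of $X$ being a $u\in\Sigma^*$ with $X\Rightarrow^+ uX\alpha$ for some sentential form $\alpha$ (necessarily $u\in\Sigma^+$, since the grammar has no $\varepsilon$-left-recursion). The first item is then proved by contraposition in one direction and by induction on the $\prec$-ordering of the $\approx$-classes in the other; the second item is pure word combinatorics; and the third reuses the same machinery together with Proposition~\ref{prop-rajzos}.

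For the nontrivial direction of the first item I would show: if some recursive $X$ has left-return words $u,v$ with $uv\neq vu$, then $(L(G),<_\ell)$ has a dense suborder. Iterating the two witnessing derivations yields $X\Rightarrow^+ sX\gamma_s$ for every $s$ in the free monoid on $\{u,v\}$; fixing a context $S\Rightarrow^* xXy$, a derivation $X\Rightarrow^* w$, and (using usability and $\varepsilon$-freeness) arbitrary terminal completions $t_s,y'$ of $\gamma_s$ and $y$, we obtain words $z_s=x\,s\,w\,t_s\,y'\in L(G)$. Since $uv\neq vu$, the words $A=uv$ and $B=vu$ have the same length and are distinct, so after possibly swapping the names of $u$ and $v$ we may assume $A<_s B$; then the monoid morphism $h$ with $h(a)=A$, $h(b)=B$ is injective, maps prefix-free sets to prefix-free sets, and restricts to an order-embedding on each of them, so it carries the prefix-free dense ordering $(aa+bb)^*ab$ onto a prefix-free, $<_s$-dense set $D$ of $\{u,v\}$-products. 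As $\{x\,s:s\in D\}$ is again prefix-free, for distinct $s,s'\in D$ the comparison of $z_s$ and $z_{s'}$ is settled already within the prefixes $x\,s$ and $x\,s'$, so $s\mapsto z_s$ is an order-embedding of $D$ into $(L(G),<_\ell)$, which is therefore not scattered. By the Lyndon--Sch\"utzenberger criterion ($uv=vu$ iff $u,v$ are powers of a common word), the complementary condition is that all left-return words of $X$ lie in $p^*$ for a single $p$; taking $p$ to be the primitive root of a shortest left-return word yields $u_X$, and this is decidable because the set of left-return words of $X$ is an effectively constructible context-free language, which one intersects with the regular complement of $u_X^*$ and tests for emptiness.

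For the converse of the first item I would argue by induction on $\prec$ over the finitely many $\approx$-classes (so $\prec$ is well-founded on them). Fix a class $C$ and $X\in C$, with $L(Y)$ scattered for all $Y\prec X$. From ``all left-return words of $X$ lie in $u_X^+$'' one first derives the second item: for $X\approx X'$, composing any derivation $X\Rightarrow^+ uX'\alpha$ with one of the form $X'\Rightarrow^+ vX\beta$ shows $uv\in u_X^+$, so $u$ is the prefix of a power of $u_X$ of length $\equiv-|v|\pmod{|u_X|}$, and since these residues turn out to be independent of the choices, $u\in u_X^*u_{X,X'}$ for a well-defined proper prefix $u_{X,X'}$ of $u_X$. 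Hence all left-recursion of $X$ inside $C$ is governed by the single period $u_X$, which makes it possible to express $(L(X),<_\ell)$ as a scattered sum --- over an ordering built from $u_X^*$ --- of finitely many languages, each generated by an escaping production of a nonterminal of $C$, hence over symbols strictly below $X$, and therefore scattered by the induction hypothesis; a scattered sum of scattered orderings being scattered, $L(X)$ is scattered.

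For the third item, if $L(G)$ is scattered and some $w\in L(X)$ has $u_X^\omega<_s w$ --- the only way $L(X)<_\ell u_X^\omega$ can fail, since a finite word cannot exceed $u_X^\omega$ in the prefix order --- then, writing $u_X^m$ for a left-return word and using a context $S\Rightarrow^* xXy$, the words $z_j:=x\,u_X^{jm}\,w\,t_j\,y'$ satisfy $z_{j+1}<_s z_j$: the place where $w$ first overtakes $u_X^\omega$ lies $m|u_X|$ positions further along in $z_{j+1}$ than in $z_j$, so at the earlier place $z_{j+1}$ still carries the smaller $u_X^\omega$-letter while $z_j$ already carries the larger one; this $-\omega$ chain shows $L(G)$ is not well-ordered. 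Conversely, if $L(G)$ is scattered and every $L(X)<_\ell u_X^\omega$, the same induction and decomposition show $L(X)$ well-ordered, the condition $L(X)<_\ell u_X^\omega$ being precisely what forbids a descending chain created by the outer $u_X^*$ (compare $a^*b\not<_\ell a^\omega$ with $o(a^*b)=-\omega$ against $(bb)^*a<_\ell(bb)^\omega$ with $o((bb)^*a)=\omega$). For effectiveness one additionally notes that $\{w:u_X^\omega<_s w\}$ is regular by Proposition~\ref{prop-rajzos}, so $L(X)\cap\{w:u_X^\omega<_s w\}=\emptyset$ is decidable for each recursive $X$. The step I expect to be the real obstacle is the converse of the first item: making the ``scattered-sum over $u_X^*$ of escaping-production languages'' decomposition precise when nonterminals of $C$ may respawn $C$-symbols, and proving the closure facts it rests on --- that this periodic outer structure raises the Hausdorff rank only by a bounded amount and that the concatenations involved preserve scatteredness --- whereas the density construction, the combinatorics behind $u_{X,X'}$, and all the decidability claims are comparatively routine.
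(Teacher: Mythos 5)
First, note that the paper does not prove this statement at all: Theorem~\ref{thm-esik} is recalled verbatim from the cited work of Bloom and \'Esik, so there is no in-paper proof to compare yours against; what follows assesses your argument on its own terms.

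The parts you call routine are indeed essentially correct. The necessity direction of the first item (two non-commuting left-return words $u,v$ give, via $A=uv$, $B=vu$ and the morphism $h$, an order-embedding of the dense prefix-free language $(aa+bb)^*ab$ into $L(G)$) works, as does the length/residue argument pinning down $u_{X,X'}$ as a uniquely determined proper prefix of $u_X$, the descending chain $z_{j+1}<_s z_j$ showing that $u_X^\omega<_s w$ for some $w\in L(X)$ destroys well-orderedness, and the decidability claims (the left-return language of $X$ is an effectively constructible context-free language, and $\{w:u_X^\omega<_s w\}$ is regular by Proposition~\ref{prop-rajzos}, so the relevant emptiness tests are decidable).

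The genuine gap is exactly where you flag it, and it is not a technicality but the substantive half of the theorem: the implications ``periodicity condition $\Rightarrow$ $L(G)$ scattered'' and ``periodicity plus $L(X)<_\ell u_X^\omega$ for all recursive $X$ $\Rightarrow$ $L(G)$ well-ordered.'' Your proposed decomposition of $(L(X),<_\ell)$ as a scattered sum, indexed by an ordering built from $u_X^*$, of languages generated by escaping productions of the class $C$ of $X$ does not exist in this flat form: in a derivation $X\Rightarrow^+ u_X^n u_{X,X'}X'\alpha$ the tail $\alpha$ may itself contain nonterminals of $C$, so after one escaping step the remaining word is not produced ``over symbols strictly below $X$'' but contains further, unboundedly nested occurrences of $C$-nonterminals whose contributions interleave lexicographically with the outer periodic layer. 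Consequently the induction hypothesis on $\prec$ cannot be applied to the summands as you describe them, and the closure facts you would need (that this nested periodic structure preserves scatteredness and only boundedly increases Hausdorff rank) are precisely what must be proved; they are not among the standard closure properties (scattered sums of scattered orders, finite unions) you invoke. As written, the sufficiency directions of items one and three therefore remain unproved, which is why the statement is taken from Bloom--\'Esik rather than reproved in the paper.
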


\begin{proposition}
\label{prop-sup-ux}
	If $L(X)$ is well-ordered for the recursive nonterminal $X$, then $\bigvee L(X)=u_X^\omega$.
\end{proposition}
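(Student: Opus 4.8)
The plan is to establish the two facts that together say $u_X^\omega$ is the least upper bound of $L(X)$ in the lexicographic order on $\Sigma^*\cup\Sigma^\omega$: first, that $w<_\ell u_X^\omega$ for every $w\in L(X)$, and second, that every $w'<_\ell u_X^\omega$ (finite or infinite) is strictly below some word of $L(X)$, hence is not an upper bound. Since $L(X)$ is well-ordered it is scattered, so $u_X\in\Sigma^+$ is defined by the first item of Theorem~\ref{thm-esik}.

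For the upper bound I would invoke the third item of Theorem~\ref{thm-esik}. Passing to the subgrammar $G_X$ obtained by keeping only the nonterminals reachable from $X$ (still $\varepsilon$-free, with all nonterminals usable, no left recursion, and generating the well-ordered language $L(X)$), that item yields $L(X)=L(G_X)<_\ell u_X^\omega$, the witness being the same $u_X$ as in $G$ because it is determined solely by the derivations $X\Rightarrow^+ uX\alpha$, which coincide in $G$ and $G_X$. This is the step where well-orderedness is genuinely used: for a merely scattered $L(X)$ it can fail, e.g. for $\{a^nb:n\ge 0\}$ generated by $X\to aX\mid b$ one has $u_X=a$ but $b>_s a^\omega$.

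For the second fact I would exhibit, inside $L(X)$, words with arbitrarily long prefixes of $u_X^\omega$. Since $X$ is recursive and not left recursive, expanding to terminals the symbols preceding a recursive occurrence of $X$ gives a derivation $X\Rightarrow^+ uX\alpha$ with $u\in\Sigma^*$; by the first item of Theorem~\ref{thm-esik}, $u=u_X^{\,j}$ for some $j\ge 1$. Iterating this derivation gives $X\Rightarrow^+ u_X^{\,jn}X\alpha^n$ for every $n\ge 1$, and since every symbol is usable we may fix terminal strings $z_0$ with $X\Rightarrow^* z_0$ and $z_1$ with $\alpha\Rightarrow^* z_1$, so $w_n:=u_X^{\,jn}z_0z_1^n\in L(X)$ has $u_X^{\,jn}$ as a prefix of length $jn|u_X|\to\infty$. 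Now, given $w'<_\ell u_X^\omega$, choose $n$ so that $jn|u_X|$ exceeds $|w'|$ (in case $w'$ is finite with $w'<_p u_X^\omega$) or the first position where $w'$ disagrees with $u_X^\omega$ (in case $w'<_s u_X^\omega$). In the first case $w'<_p w_n$; in the second, $w_n$ agrees with $u_X^\omega$, hence with $w'$, up to just before that mismatch and then carries there the letter of $u_X^\omega$, which is larger than that of $w'$, so $w'<_s w_n$. Either way $w'<_\ell w_n\in L(X)$.

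Combining the two facts, $u_X^\omega$ dominates $L(X)$ while nothing strictly below it does, so $\bigvee L(X)=u_X^\omega$. I do not expect a serious obstacle; the only mildly delicate points are noting that $u_X$ is unchanged on restricting to $G_X$ (immediate from its defining property) and the index bookkeeping in the last paragraph, namely choosing $n$ large enough that $w_n$ already ``commits'' to $u_X^\omega$ past the position that matters for $w'$.
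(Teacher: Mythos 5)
Your proof is correct and follows essentially the same route as the paper's: the upper bound $L(X)<_\ell u_X^\omega$ comes from Theorem~\ref{thm-esik}, and recursiveness plus usability are used to pump a derivation $X\Rightarrow^+ u_X^{\,j}X\alpha$ into words of $L(X)$ with arbitrarily long prefixes of $u_X^\omega$, forcing the supremum to be $u_X^\omega$. Your two refinements --- restricting to the subgrammar $G_X$ so that the well-orderedness hypothesis on $L(X)$ alone suffices to invoke the theorem, and spelling out why no $w'<_\ell u_X^\omega$ can be an upper bound --- are just more explicit versions of steps the paper leaves implicit.
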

\begin{proof}
	By Theorem~\ref{thm-esik}, $L(X)<_\ell u_X^\omega$. Since $X$ is recursive and all the nonterminals are usable,
	there exists some derivation of the form $X\Rightarrow uXv$ for some words $u,v\in\Sigma^*$ and so $u=u_X^m$ for some $m>0$.
	Since $X$ is usable, there exists some word $w\in L(X)$ and so for each $n>0$, the word $u_X^{m\cdot n}wv^{n}$ is in $L(X)$ and is still upperbounded by $u_X^\omega$.
	As the supremum of these words is $u_X^\omega$, we got the claimed result.
\end{proof}

We call an infinite language $L\subseteq\Sigma^*$ a \emph{prefix chain} if for each $u,v\in L$, either $u\leq_pv$ or $v\leq_p u$,
that is, $L$ is totally ordered by the prefix relation, or equivalently, $L\subseteq\mathbf{Pref}(w)$ for some $\omega$-word $w$.
Clearly, any language $L$ is either a prefix chain or contains two words $u,v$ with $u<_sv$.
Note that if $L$ is a prefix chain, then $o(L)=\omega$.

\begin{lemma}
\label{lem-strict-pair}
	It is decidable for each nonterminal $X$ whether $L(X)$ is a prefix chain.
\end{lemma}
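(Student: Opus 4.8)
The plan is to reduce the question ``is $L(X)$ a prefix chain?'' to a finite check over the recursive structure of the grammar, exploiting the fact established in Proposition~\ref{prop-rajzos} that membership-type questions about $<_s$ against a fixed regular $\omega$-word are regular (hence decidable), together with Lemma~\ref{lem-sequence} which lets us extract from any infinite $L(\alpha)$ an explicit witnessing sequence of one of the three shapes. First I would observe that $L(X)$ is a prefix chain if and only if there are no two words $u,v\in L(X)$ with $u<_s v$; equivalently, $L(X)$ is \emph{not} a prefix chain iff there exist a derivation context and pumpable subwords producing a strict pair. So the real content is: decide whether $L(X)$ contains a $<_s$-incomparable-by-prefix pair, i.e. whether the relation ``$u<_s v$ for some $u,v\in L(X)$'' is nonempty.

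The key step is to note that the set $\{(u,v)\in L(X)\times L(X) : u <_s v\}$ is nonempty iff there is a word $x\in\Sigma^*$ and two letters $a<b$ such that both $xa\Sigma^*$ and $xb\Sigma^*$ meet $L(X)$. This, in turn, can be read off from the \emph{leftmost} derivations of $X$: $L(X)$ fails to be a prefix chain precisely when there is a reachable sentential form (from $X$) of the shape $x\,Y\,\beta$ with $x$ terminal, such that $Y\beta$ can produce words starting with two distinct letters $a\ne b$ — because then, ordering those two letters, we get a strict pair in $L(X)$; and conversely, if every leftmost derivation from $X$ is ``deterministic in its first differing position'' then $L(X)\subseteq\mathbf{Pref}(w)$ for the unique $\omega$-word $w$ determined by the grammar, so it is a prefix chain. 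The set of first letters producible from a sentential form is computable (standard ${\rm First}$-set computation), and the set of terminal prefixes $x$ reachable before a nonterminal is ``read'' is a regular set, so checking whether two distinct first letters can ever be exposed below a common terminal prefix is decidable. Concretely I would: (1) compute ${\rm First}(Z)$ for every symbol $Z$; (2) say $Y$ is \emph{branching} if $|{\rm First}(Y\beta)|\ge 2$ for some production body $Y\beta$ reachable appropriately, more precisely if the leftmost-reachable sentential forms from $X$ ever offer two distinct continuation letters; (3) decide reachability of such a branching configuration. If one exists, $L(X)$ is not a prefix chain; otherwise it is.

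Alternatively — and this is probably the cleaner route to write up — I would use Lemma~\ref{lem-sequence} directly: apply it to $\alpha = X$ to obtain a sequence $w_0,w_1,\dots\in L(X)$ and a regular $\omega$-word $w$, in one of cases (i), (ii), (iii). If the lemma returns case (iii) (an ascending prefix chain), that alone does not settle matters, but then I would additionally test, using Proposition~\ref{prop-rajzos}, whether $L(X)\cap \{u : u\not\le_p w \text{ and } u\not>_p w\text{-ish}\}$, i.e. whether $L(X)\subseteq\mathbf{Pref}(w)$: since $\mathbf{Pref}(w)$ is regular and $L(X)$ is context-free, the inclusion $L(X)\subseteq\mathbf{Pref}(w)$ is decidable (emptiness of a context-free language intersected with a regular one). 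If cases (i) or (ii) occur we already have $u<_s v$ for two members of $L(X)$ — but one must be slightly careful that the $w_i$ genuinely lie in $L(X)$ and are $<_s$-related, which the lemma guarantees — so $L(X)$ is not a prefix chain. Thus the algorithm is: run Lemma~\ref{lem-sequence} on $X$; if it yields (i) or (ii), answer ``no''; if it yields (iii) with limit $w$, answer ``yes'' iff $L(X)\subseteq\mathbf{Pref}(w)$, which is decidable by Proposition~\ref{prop-rajzos} plus closure/emptiness of context-free languages.

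The main obstacle is the completeness direction in the case-(iii) branch: I must argue that if Lemma~\ref{lem-sequence} produces a prefix chain then the \emph{only} candidate limit is this particular $w$, so that testing $L(X)\subseteq\mathbf{Pref}(w)$ is sound — i.e. that a prefix-chain language has a unique $\omega$-word as its supremum, which is immediate, but I also need that $w$ as computed is that supremum, which follows because the $w_i$ are cofinal in $L(X)$ when $L(X)$ is a prefix chain (any word of $L(X)$ is eventually a prefix of some $w_i$, else it would be $<_s$-incomparable with the chain, contradicting prefix-chain-ness). Making that cofinality argument precise, and handling the bookkeeping that $\mathbf{Pref}(w)$ is regular only because $w=uv^\omega$ is regular (guaranteed by Lemma~\ref{lem-sequence}), is where the care is needed; everything else is a routine appeal to decidability of emptiness for context-free languages and to the already-established Propositions.
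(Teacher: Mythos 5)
Your second (``cleaner'') route is exactly the paper's proof: apply Lemma~\ref{lem-sequence} to $X$, answer ``no'' in cases (i)/(ii), and in case (iii) decide $L(X)\subseteq\mathbf{Pref}(uv^\omega)$ using the regularity of $\mathbf{Pref}(uv^\omega)$ from Proposition~\ref{prop-rajzos}; your cofinality remark even spells out the completeness step the paper leaves implicit. (Your first, $\mathrm{First}$-set sketch is not sound as stated -- two words of $L(X)$ can diverge after a common terminal prefix without any single reachable sentential form $xY\beta$ exposing two continuation letters, e.g.\ $X\to A\mid B$, $A\to ca$, $B\to cb$ -- but your final algorithm does not rely on it.)
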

\begin{proof}
	By Lemma~\ref{lem-sequence} we can effectively generate an infinite sequence $w_0,w_1,\ldots$, either ascending or descending, belonging to $L(X)$ along with its limit, which is of the form $uv^\omega$ for some
	$u\in\Sigma^*$, $v\in\Sigma^+$.
	Now if the sequence is either a $>_s$-chain or a $<_s$-chain, then $L(X)$ cannot be a prefix chain.
	
	Otherwise, the sequence itself is a prefix chain and its limit is $uv^\omega$, hence the whole language $L(X)$ is a prefix chain if and only if $L(X)\subseteq \mathbf{Pref}(uv^\omega)$ which can be effectively
	decided since $\mathbf{Pref}(uv^\omega)$ is a regular language.
\end{proof}

\begin{lemma}
\label{lem-omega-supremum}
	If $L$ is a context-free language with $o(L)=\omega$, then $\bigvee L$ is a computable regular word.
\end{lemma}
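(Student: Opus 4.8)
The plan is to reduce the statement to Lemma~\ref{lem-sequence}. First I would fix a context-free grammar for $L$ and transform it, as recalled at the beginning of this section, into a grammar $G=(N,\Sigma,P,S)$ of the normal form used here, with $L(G)=L\setminus\{\varepsilon\}$. This affects neither the supremum nor the order type: if $\varepsilon\in L$ then $\varepsilon$ is the $<_\ell$-least element of $L$, hence irrelevant for $\bigvee L$, and deleting the least element of an ordering of type $\omega$ again yields an ordering of type $\omega$. So we may assume $o(L(G))=\omega$, and it suffices to compute $\bigvee L(G)$.

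Since $o(L(G))=\omega$, the language $L(G)$ is well-ordered, hence contains no infinite $<_s$-descending chain (such a chain would be a subordering of type $-\omega$), so case ii) of Lemma~\ref{lem-sequence} cannot occur for sequences inside $L(G)$. Applying the (effective) Lemma~\ref{lem-sequence} to the sentential form $S$ --- recall $L(S)=L(G)$ is infinite --- I thus obtain a sequence $w_0,w_1,\ldots\in L(G)$ falling under case i) or case iii). In both cases the $w_i$ are pairwise distinct and $<_\ell$-increasing (because $<_s,<_p\subseteq<_\ell$), and the procedure also outputs a regular $\omega$-word $w=uv^\omega$ (with $u\in\Sigma^*$, $v\in\Sigma^+$) such that $w=\bigvee_{i\ge 0}w_i$.

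It then remains to show $w=\bigvee L(G)$. Fix an order isomorphism $f:\omega\to(L(G),<_\ell)$; since $(w_i)_{i\ge 0}$ is strictly $<_\ell$-increasing in $L(G)$, the set $\{f^{-1}(w_i):i\ge 0\}$ is an infinite subset of $\omega$, hence cofinal in $\omega$, so $\{w_i:i\ge 0\}$ is cofinal in $L(G)$. Therefore $\{w_i:i\ge 0\}$ and $L(G)$ have exactly the same set of upper bounds in $\Sigma^*\cup\Sigma^\omega$, whence $\bigvee L(G)=\bigvee_{i\ge 0}w_i=w=uv^\omega$, a regular $\omega$-word whose presentation $(u,v)$ has been computed effectively.

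The step that carries the real content, as opposed to routine bookkeeping, is the cofinality argument of the third paragraph, and it is precisely there that the hypothesis $o(L)=\omega$ is used with full strength, not merely ``$L$ infinite and well-ordered'': for a language of order type, say, $\omega\times 2$, the ascending sequence returned by Lemma~\ref{lem-sequence} could lie entirely within the first $\omega$-block, and then its supremum would be a strict lower bound of $\bigvee L$ rather than $\bigvee L$ itself. A minor point to verify is that the normal-form transformation preserves the language up to possibly removing $\varepsilon$, but that is standard.
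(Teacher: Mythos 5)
Your proposal is correct and follows essentially the same route as the paper: apply Lemma~\ref{lem-sequence}, note that well-orderedness rules out the descending case so the sequence is increasing with computable regular supremum $uv^\omega$, and use $o(L)=\omega$ to conclude that this supremum is cofinal in $L$ and hence equals $\bigvee L$. The extra bookkeeping about the grammar normal form and the explicit cofinality argument are just more detailed versions of what the paper states in one line.
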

\begin{proof}
	Applying Lemma~\ref{lem-sequence} we can generate a (necessarily increasing) sequence $w_0<w_1<\ldots$ of words belonging to $L$ along with their supremum $uv^\omega$.
	Since the order type of $L$ is also $\omega$, its supremum has to coincide by $uv^\omega$.
\end{proof}

\begin{lemma}
\label{lem-strict-bigger-than-omega}
	Let $X$ be a nonterminal such that $L(X)$ is not a prefix chain and $\alpha$ be a sentential form with $L(\alpha)$ being infinite.
	Then $o(X\alpha)$ is an infinite order type different from $\omega$.
\end{lemma}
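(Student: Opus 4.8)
The plan is to show that $o(X\alpha)$ is infinite and to rule out $o(X\alpha)=\omega$ by exhibiting either a subordering of type $-\omega$ inside a well-ordered-looking initial segment, or a non-$\omega$ structure. First I would observe that $L(X)$ is infinite (every nonterminal generates an infinite language by the standing assumptions), so $L(X\alpha)=\{uv:u\in L(X),v\in L(\alpha)\}$ is infinite, hence $o(X\alpha)$ is an infinite order type. For the main point, since $L(X)$ is not a prefix chain, by the remark preceding Lemma~\ref{lem-strict-pair} there exist words $u,u'\in L(X)$ with $u<_s u'$, say $u=xay$ and $u'=xbz$ with $a<b$. Fix any word $v_0\in L(\alpha)$ and note $L(\alpha)$ is infinite.

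Next I would split on whether $L(\alpha)$ contains a descending $<_s$-chain or an ascending prefix chain; by Lemma~\ref{lem-sequence} applied to $\alpha$ we obtain a sequence $v_1,v_2,\ldots\in L(\alpha)$ of one of the three types together with a limit $pq^\omega$. In case (i) (an ascending $<_s$-chain in $L(\alpha)$) the words $u\cdot v_1 <_s u\cdot v_2 <_s \cdots$ form an $\omega$-chain all lying strictly below $u'\cdot v_0$ in $L(X\alpha)$ (because $u<_su'$ forces $uw<_s u'w'$ for all $w,w'$), so $o(X\alpha)$ has an element with infinitely many predecessors that are themselves not cofinal; more carefully, $u'v_0$ is an element of $L(X\alpha)$ with infinitely many elements below it, yet the segment below it need not be all of the ordering — this already shows $o(X\alpha)\neq\omega$ only if we also know there is something at or above $u'v_0$, which holds since $u'v_0\in L(X\alpha)$ itself together with the fact that $u'v_0$ is not the supremum. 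Cleaner: the set $\{uv_i : i\geq 1\}\cup\{u'v_0\}$ is a subordering of $L(X\alpha)$ of order type $\omega+1$, and $\omega+1$ does not embed into $\omega$, so $o(X\alpha)\neq\omega$. In case (ii) (a descending $<_s$-chain $v_1>_sv_2>_s\cdots$ in $L(\alpha)$), the words $uv_1>_suv_2>_s\cdots$ give a subordering of type $-\omega$ inside $L(X\alpha)$, hence $o(X\alpha)$ is not well-ordered, in particular not $\omega$. In case (iii) (an ascending prefix chain with limit $pq^\omega$), I would instead use the strict pair in $L(X)$: the words $u v_i$ form a prefix chain, but I can prepend the $<_s$-incomparability — actually the clean move is that $uv_1<_p uv_2<_p\cdots$ while $u'v_0$ satisfies $uv_i<_s u'v_0$ for all $i$ (since $u<_s u'$ and the $<_s$-difference occurs within $u$, before any prefix extension matters); so again $\{uv_i\}\cup\{u'v_0\}$ has type $\omega+1$ and $o(X\alpha)\neq\omega$.

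The one subtlety is that in case (iii) the limit of the prefix chain might be $u_1u_2^\omega$ or $u_1u_3u_4^\omega$ (per Lemma~\ref{lem-sequence}) but this does not affect the argument, since I only need $u<_s u'$ in $L(X)$ to produce the extra element $u'v_0$ strictly above the whole $\omega$-chain. The main obstacle I anticipate is organizing the case analysis cleanly so that in every case I actually produce an embedded copy of $\omega+1$ (or of $-\omega$), rather than merely an $\omega$-chain — the key repeated lemma being that $u<_s u'$ implies $uw<_s u'w'$ for all $w,w'\in\Sigma^*$, so any $<_s$-gap in $L(X)$ survives concatenation with $L(\alpha)$ and yields a "witness above" the chain. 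Once that is in place, $o(X\alpha)$ is infinite and both $\omega+1$ and $-\omega$ fail to embed into $\omega$, completing the proof.
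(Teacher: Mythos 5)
Your proposal is correct and rests on the same key observation as the paper's proof: a strict pair $u<_s u'$ in $L(X)$ survives concatenation, so the infinitely many words $uL(\alpha)$ all lie strictly below $u'v_0\in L(X\alpha)$, giving an element with infinitely many predecessors (equivalently, an embedded $\omega+1$), which rules out $\omega$. The only difference is that your case analysis via Lemma~\ref{lem-sequence} is unnecessary: the single observation $uL(\alpha)<_s u'v_0$ already handles all three cases uniformly, which is exactly how the paper argues.
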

\begin{proof}
	Since $L(X)$ is not a prefix chain and is infinite, there exists $u,v\in L(X)$ with $u<_sv$. Then $uL(\alpha)<_svy$ for any member $y$ of $L(\alpha)$, hence each such $vy$ has infinitely many lower bounds in $L(X\alpha)$,
	thus $o(X\alpha)$ cannot be $\omega$.
\end{proof}

The last lemma of the subsection is a bit technical:

\begin{lemma}
\label{lem-l1l2-automata-decidable}
	If $L_1\subseteq\mathbf{Pref}(uv^\omega)$ is a context-free prefix chain with order type $\omega$ for some words $u\in\Sigma^*$ and $v\in\Sigma^+$
	and $L_2\subseteq\Sigma^*$ is a context-free language with order type $\omega$,
	then it is decidable whether there exists some $w_1\in L_1$, $u'\in\Sigma^*$ and $a\in\Sigma$ with
	$w_1u'<_puv^\omega$, $w_1u'a<_suv^\omega$ and $u'a<_p\bigvee L_2$.
\end{lemma}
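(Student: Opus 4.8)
The plan is to unfold the three conditions into constraints on individual letter positions, which reduces the problem to a comparison between two regular $\omega$-words together with an arithmetic membership test.

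First, since $o(L_2)=\omega$, Lemma~\ref{lem-omega-supremum} gives that $\beta:=\bigvee L_2$ is a computable regular $\omega$-word; write $\beta=\beta_1\beta_2\beta_3\cdots$. Suppose a triple $(w_1,u',a)$ satisfies the three conditions, and set $m:=|w_1|$, $n:=|u'|$. Since $w_1\in L_1\subseteq\mathbf{Pref}(uv^\omega)$, the word $w_1$ is forced to be the length-$m$ prefix of $uv^\omega$, and $m$ ranges exactly over $M:=\{|w|:w\in L_1\}$. Given $m$ and $n$, the condition $w_1u'<_puv^\omega$ forces $u'$ to be the factor of $uv^\omega$ occupying positions $m+1,\dots,m+n$, and $u'a<_p\beta$ forces moreover $u'=\beta_1\cdots\beta_n$ and $a=\beta_{n+1}$. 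Writing $\sigma_m$ for the $\omega$-word obtained by deleting the first $m$ letters of $uv^\omega$, the remaining condition $w_1u'a<_suv^\omega$ reads $a<(\sigma_m)_{n+1}$, i.e. $\beta_{n+1}<(\sigma_m)_{n+1}$. Conversely, any $m\in M$ and $n\ge 0$ with $\beta_1\cdots\beta_n=(\sigma_m)_1\cdots(\sigma_m)_n$ and $\beta_{n+1}<(\sigma_m)_{n+1}$ yield such a triple. Since $\beta$ and $\sigma_m$ are both infinite words, this pair of conditions on $n$ holds for some $n$ if and only if $\beta<_s\sigma_m$ (then $n$ is the length of the longest common prefix of $\beta$ and $\sigma_m$). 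Hence the required triple exists if and only if $\beta<_s\sigma_m$ for some $m\in M$.

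It remains to decide whether $M\cap R\ne\emptyset$, where $R:=\{m\ge 0:\beta<_s\sigma_m\}$. The family $\{\sigma_m:m\ge 0\}$ is finite and explicitly computable: for $m<|u|$, $\sigma_m$ is the corresponding suffix of $u$ followed by $v^\omega$; for $m\ge|u|$, $\sigma_m$ depends only on $(m-|u|)\bmod|v|$ and is a cyclic rotation of $v$ raised to the power $\omega$. So there are at most $|u|+|v|$ such words, each a concretely given regular $\omega$-word, and comparing any of them with $\beta$ under $<_s$ is decidable (both are ultimately periodic). Consequently $R$ is an effectively computable, ultimately periodic subset of $\mathbb{N}$ (periodic with period $|v|$ beyond $|u|$). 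On the other side, since $L_1$ is a prefix chain, distinct words of $L_1$ have distinct lengths, so the image of $L_1$ under the homomorphism collapsing $\Sigma$ to a single letter is $\{1^m:m\in M\}$; being a unary context-free language it is regular, hence $M$ is an effectively computable, ultimately periodic subset of $\mathbb{N}$. The intersection of two effectively ultimately periodic subsets of $\mathbb{N}$ is effectively ultimately periodic, and emptiness of such a set is decidable by inspecting a bounded initial segment; this settles the lemma.

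The only delicate part is the bookkeeping in the second paragraph: one must notice that fixing the lengths $m$ and $n$ already determines $w_1$ and $u'$ completely through the two prefix conditions, leaving free only the choice of $m\in M$, and that the surviving strict inequality collapses precisely to the $\omega$-word comparison $\beta<_s\sigma_m$. Once this is seen, everything else is routine, relying only on computability of $\bigvee L_2$, decidability of $<_s$-comparisons between regular $\omega$-words, and semilinearity of the length set of a context-free language.
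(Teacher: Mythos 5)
Your proposal is correct. The combinatorial core is the same as in the paper -- a word $w_1\in L_1$ matters only through its position along $uv^\omega$ (finitely many residues modulo $|v|$ beyond $|u|$), and the existence of the pair $(u',a)$ collapses to the comparison $\bigvee L_2<_s\sigma_m$ between two regular $\omega$-words -- but your implementation of the decision procedure is genuinely different. The paper keeps everything inside automata: it builds the product $M_{u,v}\times M_{u_2,v_2}$, computes the set $Q_1$ of $Q_{<_p}$-states hit by $L_1$ via decidability of emptiness of a context-free language intersected with a regular one, and tests reachability of a state $(q_{<_s},q')$ with $q'\in Q'_{<_p}$; your shortest-witness bookkeeping (first component leaves $Q_{<_p}$ exactly at the last letter, second component stays in $Q'_{<_p}$) is done there by a minimal-length argument on the witnessing word. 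You instead translate the position information into integers: the admissible positions form the length set $M$ of $L_1$ (effectively ultimately periodic since the unary image of a context-free language is effectively regular, i.e.\ Parikh), the ``good'' positions form the set $R=\{m:\bigvee L_2<_s\sigma_m\}$ (effectively ultimately periodic with period $|v|$, using decidable $<_s$-comparison of ultimately periodic words), and the lemma reduces to emptiness of $M\cap R$. What the paper's route buys is uniformity -- the same automaton $M_{u,v}$ already underlies Proposition~\ref{prop-rajzos} and the surrounding lemmas, and no appeal to semilinearity is needed; what your route buys is that it makes the role of the prefix-chain hypothesis explicit (words of $L_1$ are determined by their lengths) and isolates the arithmetic content of the problem, at the mild cost of invoking Parikh's theorem. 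Both establish the statement; only make sure, when writing it up, to state explicitly that the $<_s$-comparison of two given ultimately periodic $\omega$-words is decidable (first difference, if any, occurs within a computable bound), which you use to compute $R$.
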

\begin{proof}
	Let us write $u=a_1\ldots a_k$ and $v=b_1\ldots b_t$ and consider the automaton $M_{u,v}$ of Figure~\ref{fig-uv-automata}.
	For each state $q$ of $M_{u,v}$, let $L_{u,v}(q)$ stand for the (regular) language $\{w\in\Sigma^*:q_\varepsilon\cdot w=q\}$.
	
	Let $Q_1\subseteq Q_{<_p}$ be the set of those states $q$ for which $L_1\cap L_{u,v}(q)$ is nonempty. Since each $L_{u,v}(q)$ is regular,
	$Q_1$ is computable, moreover, $q\in Q_1$ if and only if $q_{\varepsilon}\cdot w=q$ for some $w\in L_1$.
	
	Now by Lemma~\ref{lem-omega-supremum} we can compute the regular word $u_2v_2^\omega=\bigvee L_2$ and consider the direct product automaton
	$M=M_{u,v}\times M_{u_2,v_2}$ where in $M_{u_2,v_2}$ we use the primed version $q'$ of each state $q$.
	
	We claim that there exists words $w_1,u'$ and a letter $a$ satisfying the conditions of the lemma if and only
	if a state of the form $(q_{<_s},q')$ is reachable from a state $(p,q'_{\varepsilon})$ in $M$ for some $q'\in Q'_{<_p}$
	and $p\in Q_1$.
	
	Indeed: assume $(p,q'_\varepsilon)\cdot w=(q_{<_s},q')$ for such states: let us choose $p,q'$ and $w$ so that $|w|$ is the shortest possible.
	Since $p\in Q_1$ and $q_{<_s}\notin Q_1$, $w=u'a$ for some word $u'\in\Sigma^*$ and $a\in\Sigma$.
	Then, $q'_\varepsilon\cdot u'\in Q'_{<_p}$, since both $q'_{<_s}$ and $q'_{>_s}$ are trap states in $M_{u_2,v_2}$.
	Since $w$ is a shortest possible word and $q_{<_s}$, $q_{>_s}$ are trap states in $M_{u,v}$, we get that $p\cdot u'\in Q_{<_p}$.
	Since $p\in Q_1$, there is some word $w_1\in L_1$
	with $q_\varepsilon\cdot w_1=p$. Thus, this choice of $w_1,u'$ and $a$ satisfies the conditions of the lemma.
	
	And similarly, given $w_1\in L_1$, $u\in\Sigma^*$ and $a\in\Sigma$ satisfying the conditions we can define $p=q_\varepsilon\cdot w_1$,
    $q'=q'_\varepsilon\cdot u'a$.
\end{proof}
\subsection{The main decision procedures}
In this part we flesh out the ``top-level'' results leading to the aforementioned computability result: that the order type of well-ordered context-free languages
with Hausdorff-rank at most $1$ is computable.

The main building block is the result that it is decidable for any context-free language $L$ whether $o(L)=\omega$ holds.

\begin{lemma}
\label{lem-alpha-order}
	If $\alpha=X_1X_2$ is a sentential form and for each $1\leq i\leq 2$ we know whether $o(X_i)=\omega$ holds or not,
	then it is also effectively computable whether $o(\alpha)$ is $\omega$.
\end{lemma}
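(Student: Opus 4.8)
The plan is to collect all cheaply-available structural information about $X_1$ and $X_2$, dispatch the easy negative cases, and reduce to a single nontrivial configuration which is then settled by Lemma~\ref{lem-l1l2-automata-decidable} together with a handful of regular-language emptiness tests. Write $L_i=L(X_i)$, so that $L(\alpha)=L_1L_2=\bigcup_{w\in L_1}wL_2$. Besides the hypothesis ($o(X_i)=\omega$ or not), we may also decide by Lemma~\ref{lem-strict-pair} whether each $L_i$ is a prefix chain and by Theorem~\ref{thm-esik} whether each $L_i$ is well-ordered, and prefix chain $\Rightarrow o=\omega\Rightarrow$ well-ordered. If some $X_i$ is a terminal, $L_i$ is a singleton: if $X_1\in\Sigma$ then $z\mapsto X_1z$ is an order-isomorphism $L_2\to L(\alpha)$, so $o(\alpha)=o(X_2)$ and we are done; the case $X_2\in\Sigma$ is handled by the same analysis as below, reading $\bigvee L_2$ as the finite word $X_2$. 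So assume $X_1,X_2\in N$; then both $L_i$ are infinite, and since an infinite order type equals $\omega$ exactly when it is well-ordered with every element having finitely many predecessors, it suffices to decide whether every $wz\in L(\alpha)$ has only finitely many $<_\ell$-predecessors.

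I would knock out the easy negative cases first. If $o(X_2)\neq\omega$ then $o(\alpha)\neq\omega$: if $L_2$ is not well-ordered it contains an infinite $>_s$-chain $z_0>_sz_1>_s\cdots$, and fixing $w\in L_1$ gives $wz_0>_swz_1>_s\cdots$ in $L(\alpha)$; and if $L_2$ is well-ordered with $o(L_2)>\omega$, then $z\mapsto wz$ is an order-embedding of $L_2$ into $L(\alpha)$, so $o(L_2)$ embeds into $o(\alpha)$, which is therefore not $\omega$. Symmetrically, if $o(X_1)\neq\omega$ then $o(\alpha)\neq\omega$: either $L_1$ contains an infinite $>_s$-chain, which produces one in $L(\alpha)$ after right-multiplying by a fixed word of $L_2$, or $L_1$ is well-ordered with $o(L_1)>\omega$, and then some $u\in L_1$ has infinitely many $<_s$-predecessors in $L_1$ (its $<_p$-predecessors form a finite chain, so the infinitely many $\le_\ell$-predecessors it must have are almost all $<_s$-predecessors), whence $uv$ has infinitely many $<_\ell$-predecessors in $L(\alpha)$ for every $v\in L_2$. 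Finally, if $o(X_1)=\omega$ but $L_1$ is not a prefix chain, Lemma~\ref{lem-strict-bigger-than-omega} (with the sentential form $X_2$, whose language is infinite) already gives $o(\alpha)\neq\omega$. What survives is exactly one configuration: $L_1$ is a prefix chain and $o(X_1)=o(X_2)=\omega$.

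In this configuration $L_1\subseteq\mathbf{Pref}(u_1v_1^\omega)$ for the computable regular word $u_1v_1^\omega=\bigvee L_1$ (Lemma~\ref{lem-strict-pair}, Lemma~\ref{lem-omega-supremum}), and $u_2v_2^\omega:=\bigvee L_2$ is computable (Lemma~\ref{lem-omega-supremum}). Enumerating $L_1$ along the prefix chain as $w^{(0)}<_pw^{(1)}<_p\cdots$, an element $w^{(n)}z$ compares with $w^{(m)}z'$ (say $m>n$, $w^{(m)}=w^{(n)}s$) by comparing $z$ with $sz'$, so that $w^{(n)}z$ can acquire infinitely many predecessors only in the following finitely many ways: (a) from longer words $w^{(m)}$, when $z$ branches strictly to the right of the tail of $u_1v_1^\omega$ issuing from $w^{(n)}$, or $z$ has a finite prefix lying on that tail followed by a factor branching strictly right of $u_2v_2^\omega$; and (b) from shorter words $w^{(m)}$ with $m<n$, when the concatenation of the $L_1$-gap after $w^{(m)}$ with $z$ branches strictly right of $u_2v_2^\omega$ — whether because the $L_1$-tail itself does so, which is precisely the condition of Lemma~\ref{lem-l1l2-automata-decidable}, or because the branch happens only once the $L_2$-part $z$ takes over. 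Each pattern is effectively checkable: the tails of $u_1v_1^\omega$ reachable from words of $L_1$ form a finite set of regular $\omega$-words computable from $L_1$ and $v_1$ (the set of word-lengths of a context-free language being semilinear); for a regular $\omega$-word $t$ the languages $\{z:z>_st\}$, $\{z:z<_st\}$ and $\mathbf{Pref}(t)$ are regular by Proposition~\ref{prop-rajzos}, so intersecting them with the context-free $L_2$ and testing emptiness is decidable; comparing regular $\omega$-words is decidable; and Lemma~\ref{lem-l1l2-automata-decidable} packages the most delicate test. Running through the (finitely many) patterns yields the decision, and in their absence $o(\alpha)=\omega$.

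The step I expect to be the real obstacle is this last one: pinning down the complete list of ways $w^{(n)}z$ can collect infinitely many predecessors and proving that nothing is missed (and that the existence side-conditions — e.g.\ that a suitable word of $L_1$ of the right length actually exists — are correctly accounted for). The difficulty is exactly the phenomenon the paper highlights with $o(a^*b)=-\omega$: concatenating $L_2$ on the right can locally reverse the prefix order of $L_1$, so predecessors of $w^{(n)}z$ stream in both from shorter and from longer words of $L_1$, and one must verify that tracking, for every eventually-periodic tail $t$ of $\bigvee L_1$, how the words of $L_2$ sit relative to $t$ under $<_s$ and $<_p$, together with where $\bigvee L_2$ branches off these tails, captures every case. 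Once the enumeration is shown exhaustive, the remaining work is routine bookkeeping on regular languages and regular $\omega$-words.
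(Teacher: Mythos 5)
Your reductions to the core configuration are sound and essentially follow the paper: eliminating non-well-ordered factors, types strictly above $\omega$, and (via Lemma~\ref{lem-strict-bigger-than-omega}) the case where $L_1$ is not a prefix chain, so that only ``$L_1$ an infinite prefix chain, $o(L_1)=o(L_2)=\omega$'' remains. But in that remaining configuration your argument stops exactly where the real work begins. You list two ``patterns'' by which an element $w^{(n)}z$ could acquire infinitely many predecessors and then concede that proving this enumeration exhaustive is an unresolved obstacle --- yet that exhaustiveness \emph{is} the content of the lemma in this case. The paper resolves it by a precise trichotomy: if $\bigvee L_1<_\ell\bigvee(L_1L_2)$, or if $w_1\cdot\bigvee L_2<_\ell\bigvee L_1$ for some $w_1\in L_1$, then $o(\alpha)\neq\omega$; otherwise $o(\alpha)=\omega$ if and only if for every $w_1\in L_1$ and every finite prefix $w$ of $\bigvee L_1$ extending $w_1$ only finitely many $w_2\in L_2$ satisfy $w_1w_2<_sw$, and this last condition is proved (in both directions) equivalent to the existential condition that Lemma~\ref{lem-l1l2-automata-decidable} decides. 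Without such a proved characterization, your ``run through the finitely many patterns'' step is not a decision procedure, and the quantifier structure (there \emph{exists} an element with infinitely many predecessors, ranging over all $n$ and all $z\in L_2$) is never pinned down to finitely many effectively testable conditions.

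A second, smaller gap: you dismiss the case $X_2\in\Sigma$ with ``the same analysis, reading $\bigvee L_2$ as the finite word $X_2$''. That does not work as stated, because your analysis of the main configuration presupposes $L(X_2)$ infinite (Lemma~\ref{lem-strict-bigger-than-omega} needs it) and reduces to $L_1$ being a prefix chain --- a reduction that is false for a terminal $X_2$: with $a<b<c$, the language $L_1=\{(bb)^ka:k\geq 0\}$ has order type $\omega$, is not a prefix chain, and yet $o(L_1c)=\omega$. The paper therefore treats $X_2\in\Sigma$ separately, proving that $o(\alpha)\neq\omega$ exactly when $ua<_p\bigvee L_1$ for some $u\in L_1$ and some letter $a<X_2$, and deciding this via Lemma~\ref{lem-omega-supremum} and Proposition~\ref{prop-rajzos}; your proposal contains neither this characterization nor a substitute for it.
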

\begin{proof}
	Clearly, if $o(X_1)$ or $o(X_2)$ is some infinite order type different from $\omega$, then
	$o(\alpha)$ is also such an order type and we can stop. Also, if $X_1\in\Sigma$, then $o(\alpha)=o(X_2)$ and we are done.
	We can also decide whether $L(X_1X_2)$ is well-ordered and if not, it cannot be $\omega$ and we can stop.	
	
	So we can assume that $X_1\in N$ and thus $L(X_1)$ is infinite and hence $o(X_1)=\omega$ by assumption. Let $L_1$, $L_2$
	and $L$ respectively stand for $L(X_1)$, $L(X_2)$ and $L(X_1X_2)$.

	If $X_2\in\Sigma$, then we have several subcases:
	
			\begin{enumerate}
				\item If there exists a word $u\in L_1$ and some letter $a<X_2$ with $ua$ being a prefix of infinitely many words in $L_1$,
					then $uX_2$ is strictly larger than each of these words $v$, and so $vX_2<_suX_2$ as well, thus $o(L)$
					cannot be $\omega$ but some other infinite order type (as $uX_2\in L$ is preceded by infinitely many members of $L$).					
				\item Otherwise, let $u\in L_1$.
					It suffices to show that there are only finitely many elements in $L$ which are smaller than $uX_2$. 
					Assume $vX_2\in L$ is so that $vX_2<_puX_2$ then $v<_pu$ as well, and as $u$ has only finitely many proper prefixes,
					we have that there can only be a finite number of such words $v$. Now if $vX_2<_suX_2$, then either $v<_\ell u$
					(that's again a finite number of possibilities, as $o(L_1)=\omega$ implies that any word $u\in L_1$ has only
					a finite number of lower bounds in $L_1$) or $u<_\ell v$. Thus, in this case (as $vX_2<_suX_2$ rules out the possibility
					of $u<_sv$) it has to hold that $u<_pv$ and $v=uax$ for some $a<X_2$.
					There are only finitely many possible choices for such letters $a<X_2$ and by assumption (see the condition of the previous subcase),
					for each such letter, $ua$ can be a prefix of only finitely many words $v\in L_1$.
					
					Thus, $uX_2$ is larger than only a finite number of members of $L$ for each $u\in L_1$, and $o(L)=\omega$ in this case.
			\end{enumerate}
			We still have to show that it is decidable which of the two cases holds.
			Observe that if $ua$ is a prefix of infinitely many words in $L_1$ for some word $u\in L_1$ and letter $a<X_2$,
			then no word $w\in L_1$ can satisfy $ua<_s w$ as then the order type of $L_1$ could not be $\omega$.
			Thus, $ua<_p\bigvee L_1$ in this case for some word $u\in L_1$ and letter $a<X_2$.
			On the other hand, if $ua<_p\bigvee L_1$ for some word $u\in L_1$ and letter $a<X_2$, then 
			for any $w\in L_1$ with $ua<_\ell w$ we cannot have $ua<_s w$ since in that case $ua<_s\bigvee L_1$ would hold since $w<_\ell\bigvee L_1$.
			Hence, whenever $ua<_\ell w$ for some word $w\in L_1$, then $ua$ is a prefix of $w$. Since the order type of $L_1$ is assumed to be $\omega$,
			and $ua$ is a prefix of $\bigvee L_1$, there has to be an infinite number of such words $w$.
			
			Thus, if $X_2\in\Sigma$, then $o(L)$ is not $\omega$ if and only if $ua<_p\bigvee L_1$ for some $u\in L_1$ and $a<X_2$.
			This condition is decidable: $\bigvee L_1$ is a computable regular word $u_1v_1^\omega$ by Lemma~\ref{lem-omega-supremum}
			and we only have to check whether the language $L_1a\cap\mathbf{Pref}(u_1v_1^\omega)$ is nonempty for some letter $a<X_2$
			and the latter is a regular language by Proposition~\ref{prop-rajzos}.
			
		If $X_2\in N$, and thus $o(X_2)=\omega$, then we again have several cases:
			\begin{enumerate}
				\item[3.] If there exist words $u,v\in L_1$ with $u<_sv$, then by Lemma~\ref{lem-strict-bigger-than-omega} we get $o(L)$ is
					some infinite order type different than $\omega$.
				\item[4.] Otherwise, $L_1$ is an infinite prefix chain, that is, $L_1\subseteq\mathbf{Pref}(uv^\omega)$ for some words $u\in\Sigma^*$,
					$v\in\Sigma^+$.
					
					We have several subcases.
					\begin{enumerate}
						\item Assume $\bigvee L_1<_\ell\bigvee L$. Since both are $\omega$-words, we have $<_s$ here.
							Thus there exists some $w=w_1w_2\in L$, $w_1\in L_1$, $w_2\in L_2$ with $\bigvee L_1<_sw$.
							Since $L_1$ is an infinite prefix chain, there exists some $w_1'\in L_1$, $w_1'<_p\bigvee L_1$ and $|w_1'|>|w|$,
							yielding $w_1'<_s w$. So $w_1'L_2<_sw$, thus $w\in L$ has an infinite number of lower bounds in $L$
							and $o(L)\neq\omega$ in this subcase.
						\item Assume there exists some $w_1\in L_1$ such that $w_1\cdot \bigvee L_2<_\ell \bigvee L_1$. Again, both being $\omega$-words
							this has to be a $<_s$ relation. This means that there exists some $w_1'\in L_1$ with $w_1\cdot \bigvee L_2<_sw_1'$
							and so $w_1\cdot L_2<_sw_1'y$ for any member $y$ of $L_2$, thus again, $o(L)\neq\omega$ in this case.
						\item Assume none of the previous conditions hold: $\bigvee L\leq_\ell \bigvee L_1$
							(hence $\bigvee L_1L_2=\bigvee L_1=uv^\omega$ as well) and
							for each $w_1\in L_1$ we have $\bigvee L_1\leq_\ell w_1\cdot\bigvee L_2$.
																				
							We claim that in this case $o(L)=\omega$ if and only if for each $w_1\in L_1$ and $w<_puv^\omega$ with $w_1<_pw$
							there exist only finitely many words $w_2\in L_2$ such that $w_1w_2<_sw$.
							
							For one direction, assume the latter condition holds. It suffices to show that for each $w<_puv^\omega$ there exist
							only finitely many many words $w_1\in L_1$, $w_2\in L_2$ with $w_1w_2<_\ell w$, since (as the supremum of these words
							is $\bigvee L_1L_2=uv^\omega$) this yields that each prefix of $o(L)$ is finite, thus $o(L)=\omega$.
							So let $w_1w_2<_\ell w$. Since $w_1\in L_1$ and $L_2\subseteq\mathbf{Pref}(uv^\omega)$, and $w<_puv^\omega$,
							we either have $w_1<_pw$ or $w\leq_p w_1$. The latter would contradict to $w_1w_2<_\ell w$, hence we have $w_1<_pw$.
							Thus, there are only finitely many options for choosing such a word $w_1\in L_1$.
							Clearly, for each fixed $w_1\in L_1$ there are only finitely many options for choosing words $w_2$ with $w_1w_2<_pw$ and by the condition
							there are only finitely many words $w_2\in L_2$ with $w_1w_2<_sw$, hence in total, there are only finitely many words
							in $L_1L_2$ preceding $w$, showing $o(L)=\omega$.
							
							For the other direction, assume the latter condition does \emph{not} hold.
							Then there exists some $w_1\in L_1$, $w<_puv^\omega$ with $w_1<_pw$ such that $w_1w_2<_sw$ for infinitely many words $w_2\in L_2$.
							In this case we can write $w_2=w_2'ax$ and $w=w_1w_2'by$ uniquely for some letters $a<b$ and words $w_2',x,y$.
							Since there are only finitely many options for the fixed words $w$ and $w_1$ to choose $w_2'$, $b$ and $a$,
							for some pair $a<b$ of letters and word $w_2'$ with $w_1w_2'b\leq_pw$ there are infinitely many words $w_2\in L_2$ such that
							$w_2'a\leq_pw_2$. Let $L_2'\subseteq L_2$ denote the (infinite) set of these words and let $w_1'\in L_1$ be 
							some word in $L_1$ with $|w_1'|\geq |w_1w_2'b|$. Since $L_1$ is an infinite prefix chain, such a word $w_1'$ exists
							and $w_1w_2'b\leq_pw_1'$, thus $w_1L_2'<_sw_1'$, and so $w_1L_2'<_sw_1'y$ for an arbitrary member $y$ of $L_2$,
							and so $w_1'y$ is preceded by infinitely many words in $L$, yielding $o(L)\neq \omega$.
					\end{enumerate}
					We still have to show that the condition of Subcase (c) is decidable. We claim that the condition does \emph{not} hold
					if and only if there exists some $w_1\in L_1$ and words $u'\in\Sigma^*$, $a\in\Sigma$ with 
					$w_1u'<_puv^\omega$, $w_1u'a<_suv^\omega$ and $u'a<_p\bigvee L_2$. Indeed, if $w_1,u',a$ are such objects,
					then there is a unique letter $b\in\Sigma$ with $w_1u'b<_puv^\omega$. Now we can choose $w=w_1u'b$ as the condition
					$u'a<_p\bigvee L_2$ implies the existence of infinitely many words $w_2\in L$ with $u'a\leq_p w_2$. The other direction
					is already treated in the proof of Subcase (c).
					
					The condition in this form is decidable due to Lemma~\ref{lem-l1l2-automata-decidable}.
			\end{enumerate}
		As we covered all the possible scenarios, and in each case we got decidability, we proved the lemma.
\end{proof}
\begin{corollary}
\label{cor-alpha-computable}
	Assume $\alpha=X_1\ldots X_n$ is some sentential form where for each $X_i$ we know whether $o(X_i)=\omega$ holds.
	Then we can decide whether $o(\alpha)=\omega$ holds.
\end{corollary}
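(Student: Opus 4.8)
The plan is to reduce the $n$-symbol case to the two-symbol case handled in Lemma~\ref{lem-alpha-order}, using a right-to-left induction on $n$. First I would dispose of the degenerate situations: if $L(\alpha)$ is finite (which is decidable), then $o(\alpha)$ is a finite ordinal and hence not $\omega$; if $L(\alpha)$ is not well-ordered (decidable by Theorem~\ref{thm-esik}), then $o(\alpha)$ is not $\omega$ either. So assume $L(\alpha)$ is infinite and well-ordered. A similar preliminary observation is that if $o(X_i)$ is some infinite order type different from $\omega$ for some $i$, then the same is true of $o(\alpha)$ (an infinite non-$\omega$ factor forces infinitely many lower bounds below some element, exactly as in the opening paragraph of the proof of Lemma~\ref{lem-alpha-order}), so we may output ``no'' in that case. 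Strip off any leading symbols $X_1,\ldots,X_{j-1}$ whose language is a singleton word and any $X_i\in\Sigma$: prepending a fixed word does not change the order type, so we may assume every $X_i$ with $L(X_i)$ infinite has $o(X_i)=\omega$, and the remaining finite-language symbols can be treated as terminals.

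The core of the argument is the inductive step. Write $\alpha = X_1\beta$ with $\beta = X_2\ldots X_n$. The key point is that we need to know whether $o(\beta)=\omega$ in order to invoke Lemma~\ref{lem-alpha-order} on the pair $(X_1,\beta)$ — but $\beta$ is a sentential form rather than a single nonterminal, and Lemma~\ref{lem-alpha-order} as stated takes two symbols $X_1,X_2$. The cleanest route is therefore to re-examine the proof of Lemma~\ref{lem-alpha-order} and observe that, in the branch $X_2\in N$, the argument never uses that $X_2$ is a single nonterminal: it only uses that $L_2 := L(X_2)$ is a context-free language with $o(L_2)=\omega$, that its supremum $\bigvee L_2$ is a computable regular word (Lemma~\ref{lem-omega-supremum}), and decidability facts (Lemma~\ref{lem-l1l2-automata-decidable}, which already deals with arbitrary context-free $L_2$ of order type $\omega$). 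Hence Lemma~\ref{lem-alpha-order} generalizes verbatim to: given a symbol $X_1$ and a sentential form $\beta$, knowing whether $o(X_1)=\omega$ and whether $o(\beta)=\omega$, one can decide whether $o(X_1\beta)=\omega$. Granting this, the induction is immediate: recursively decide whether $o(\beta)=\omega$ for the shorter form $\beta$; if the answer is ``no'' then (after the preliminary reductions have been applied, so that $L(\beta)$ is infinite) $o(\beta)$ is an infinite non-$\omega$ type and hence so is $o(\alpha)$; if ``yes'', apply the generalized Lemma~\ref{lem-alpha-order} to $(X_1,\beta)$.

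I expect the main obstacle to be precisely this ``generalization is verbatim'' claim: one must actually walk through Cases 3 and 4 (with subcases (a),(b),(c)) of the proof of Lemma~\ref{lem-alpha-order} and confirm that each step goes through with $L_2$ an arbitrary context-free language of order type $\omega$ in place of $L(X_2)$. Cases 1--2 (the old ``$X_2\in\Sigma$'' branch) are subsumed once we have folded finite-language symbols and terminals into a fixed prefix, so they require no separate treatment here. The facts that are used — $o(L_2)=\omega$ forces every element of $L_2$ to have finitely many $<_\ell$-predecessors in $L_2$, $L_2\subseteq\mathbf{Pref}(\bigvee L_2)$ when $L_2$ happens to be a prefix chain, and the decidability input from Lemma~\ref{lem-l1l2-automata-decidable} — are all already stated for general context-free languages, so no genuinely new idea is needed; it is a matter of checking that no step secretly exploited the single-nonterminal shape of $X_2$. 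Once that is confirmed, termination of the recursion is clear since $|\beta| = n-1 < n$, and decidability at each level has been established, which proves the corollary.
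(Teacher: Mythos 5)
Your overall plan -- a right-to-left induction that reduces the length-$n$ case to the two-symbol case -- is also the paper's plan; the paper realizes it by introducing fresh nonterminals $Y_1,\ldots,Y_{n-1}$ with productions $Y_i\to X_iY_{i+1}$ and $Y_{n-1}\to X_{n-1}X_n$, so that Lemma~\ref{lem-alpha-order} applies \emph{literally} at each step, whereas you instead claim that the lemma generalizes verbatim to pairs consisting of a symbol and a sentential form. For the branch where the second component generates an infinite language of type $\omega$, that claim is defensible: the proof of the $X_2\in N$ branch only uses that $L_2$ is a context-free language of order type $\omega$, and Lemma~\ref{lem-omega-supremum} and Lemma~\ref{lem-l1l2-automata-decidable} are already stated at that generality. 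The fresh-nonterminal device is simply the cleaner way to obtain the same effect without re-verifying the lemma's proof.

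The genuine gap is your dismissal of the $X_2\in\Sigma$ branch (cases 1--2 of Lemma~\ref{lem-alpha-order}). A terminal or finite-language symbol can be folded into a fixed \emph{prefix} only when it occurs at the front of $\alpha$; a terminal occurring after a nonterminal cannot be absorbed, since appending a fixed word does change the order type -- the paper's own example $o(a^*)=\omega$ but $o(a^*b)=-\omega$ makes exactly this point. Consequently your parenthetical ``so that $L(\beta)$ is infinite'' is unjustified: for $\alpha=X_1b$ with $X_1\in N$ and $b\in\Sigma$, the suffix $\beta=b$ has a finite language, the recursive call answers ``no'', and your combination rule then declares $o(\alpha)\neq\omega$, which can be wrong -- take $L(X_1)=a^+$ with the ordering $b<a$, where $o(a^+b)=\omega$. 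So the delicate analysis of cases 1--2 (the test whether $ua<_p\bigvee L_1$ for some $u\in L_1$ and $a<X_2$) is precisely what is needed when the remaining suffix generates a finite language, and it cannot be skipped. The repair is either to keep that branch (suitably generalized from a single letter to the finite language $L(\beta)$), or to follow the paper: add the fresh nonterminals $Y_i$ as above and invoke Lemma~\ref{lem-alpha-order} unchanged, whose statement already covers a terminal in the second position.
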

\begin{proof}
	We can assume that $\alpha\notin\Sigma^*$ (otherwise $o(\alpha)=1$ and we can stop).
	Using the standard construction of introducing fresh nonterminals $Y_1,\ldots Y_{n-1}$ and 
	productions $Y_1\to X_1Y_2$, $Y_2\to X_2Y_3$,\ldots, $Y_{n-1}\to X_{n-1}X_n$ we can successively decide for $Y_{n-1}, Y_{n-2},\ldots,Y_1$
	whether $o(Y_i)=\omega$; if for any of them we have that $o(Y_i)$ is some other infinite order type, then so is $o(Y_1)=o(\alpha)$,
	otherwise $o(\alpha)=\omega$.
\end{proof}

\begin{theorem}
\label{thm-omega-decidable}
	It is decidable for each recursive nonterminal $X$ whether $o(X)=\omega$ holds.
\end{theorem}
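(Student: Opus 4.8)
The plan is to run an induction, processing the $\approx$-classes of nonterminals in a bottom-up order with respect to $\prec$. When the class $C$ containing $X$ is reached, the inductive hypothesis gives, for every nonterminal $Y$ with $Y\prec Z$ for the members $Z$ of $C$, whether $o(Y)=\omega$ holds; by Corollary~\ref{cor-alpha-computable} this extends to: for every sentential form built only from symbols lying strictly $\prec$-below $C$, we can decide whether its order type is $\omega$. First I would use Theorem~\ref{thm-esik} to decide whether $L(X)$ is well-ordered; if not, then $o(X)\neq\omega$ and we stop. Otherwise I compute $u_X$ (Theorem~\ref{thm-esik}), and recall $\bigvee L(X)=u_X^\omega$ by Proposition~\ref{prop-sup-ux}.

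It then remains to decide, for such an $X$, whether $o(X)=\omega$ or $o(X)\geq\omega+1$ --- equivalently, whether some $w\in L(X)$ has infinitely many $<_\ell$-predecessors in $L(X)$. Here the plan is to reason about all members of $C$ at once, tentatively assuming $o(Z)=\omega$ for each $Z\in C$, and to check a finite list of decidable conditions whose conjunction is equivalent to $o(X)=\omega$. Splitting each production $Y\to\alpha$ of a member $Y$ of $C$ as $\alpha=\alpha_0 Z\alpha_1$ with $Z$ the leftmost symbol of $\alpha$ in $C$ (no such $Z$: the production is escaping), the word $\alpha_0$ uses only already-resolved symbols, so $o(\alpha_0)$ is computable and, when it equals $\omega$, so is $\bigvee L(\alpha_0)$ by Lemma~\ref{lem-omega-supremum}; also, by Theorem~\ref{thm-esik}, every recursive step contributes a prefix in $u_X^+$ (within the class, in $u_X^*u_{X,X'}$), so the recursion-generated prefixes are monotone approximations of $u_X^\omega$. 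The conditions to verify are, in the spirit of the subcase analysis in the proof of Lemma~\ref{lem-alpha-order}: that every escaping production of a member of $C$ has order type finite or $\omega$ (Corollary~\ref{cor-alpha-computable}); and that the suprema of these escaping parts are ``compatible'' with $u_X^\omega$ in the sense that prepending the recursion-generated $u_X$-powers leaves every word of $L(X)$ with only finitely many predecessors --- compatibility questions of exactly the shape treated by Lemma~\ref{lem-l1l2-automata-decidable} and Proposition~\ref{prop-rajzos}, using the automata $M_{u_X,\,\cdot}$ of Figure~\ref{fig-uv-automata}. When all these conditions hold, a routine induction on the length of a leftmost derivation shows every $w\in L(X)$ has finitely many predecessors, so $o(X)=\omega$.

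The hard part will be the converse direction of this characterization: that a failure of one of these conditions is \emph{necessary} for $o(X)\geq\omega+1$, so that the test is complete. For this I would apply Lemma~\ref{lem-sequence}: from a word $w\in L(X)$ with infinitely many $<_\ell$-predecessors, the well-ordered infinite set $L(X)_{<_\ell w}$ yields a pumpable ascending sequence whose limit is an $\omega$-word $\sigma\leq_\ell w<_\ell u_X^\omega$, and one then traces which production and which pumpable context produced $\sigma$; Theorem~\ref{thm-esik} is what keeps this analysis finite, since it pins every recursion prefix to a power of $u_X$. The genuinely delicate point is a production body carrying more than one occurrence of a $C$-symbol, say $Y\to\alpha_0 Z\alpha_1 Z'\alpha_2$ with $Z,Z'\in C$: one must show that well-orderedness of $L(X)$ already restricts such bodies enough --- otherwise nested pumping on the two occurrences would embed either a dense suborder or an infinite descending chain, contradicting Theorem~\ref{thm-esik} --- so that the relevant suprema remain $u_X^\omega$ and the argument falls back to the single-occurrence case handled above.
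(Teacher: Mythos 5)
Your overall architecture (induction along $\prec$, deciding well-orderedness via Theorem~\ref{thm-esik}, computing $u_X$ and $\bigvee L(X)=u_X^\omega$, testing escaping productions via Corollary~\ref{cor-alpha-computable} and comparing suprema with $u_{X'}^\omega$) does match the paper's proof, but there is a genuine gap in the list of conditions you actually test: nothing in your plan constrains what stands to the \emph{right} of the in-class occurrence in a recursive production, i.e.\ the part $\alpha_1$ in your decomposition $Y\to\alpha_0Z\alpha_1$. The paper's proof hinges on exactly this point: it first decides whether $L(X)$ is a prefix chain (Lemma~\ref{lem-strict-pair}; if so, $o(X)=\omega$ and we stop), and then checks whether $X\Rightarrow^+uX\beta$ is possible with $\beta$ containing a nonterminal; if it is, then choosing $u_0<_sv_0$ in $L(X)$ gives $uu_0L(\beta)<_suv_0w$, so some word of $L(X)$ has infinitely many $<_\ell$-predecessors and $o(X)\neq\omega$. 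Your conditions, as concretely stated, would pass the grammar $S\to bSB\mid a$, $B\to cB\mid c$ with $a<b<c$: it is well-ordered with $u_S=b$, every $Y\prec S$ has $o(Y)=\omega$, the only escaping production $S\to a$ is finite, and there is no escaping production containing a nonterminal, so no supremum test applies --- yet $L(S)=\{b^nac^m:m\geq n\geq 0\}$ has order type $\omega^2$. The culprit is the nonterminal $B$ to the right of $S$ in the recursive production, and it need not lie in the class of $X$ at all; so your ``delicate point'' about two occurrences of class symbols (which you propose to exclude by deriving a contradiction with Theorem~\ref{thm-esik}, an implication that is neither proved by you nor needed in the paper, where such productions simply force $o(X)\neq\omega$ once $L(X)$ is not a prefix chain) does not cover the actual problem.

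Two further weaknesses: your ``compatibility'' clause --- ``prepending the recursion-generated $u_X$-powers leaves every word of $L(X)$ with only finitely many predecessors'' --- is a restatement of $o(X)=\omega$, not a decidable test; the paper's concrete replacement is: for every escaping production $X'\to\alpha$ with $X'\approx X$ and $\alpha$ containing a nonterminal, $o(\alpha)=\omega$ and $\bigvee L(\alpha)=u_{X'}^\omega$. And the positive direction you call a ``routine induction'' is where the paper works hardest: once all right tails $\beta$ are known to be terminal, each $w'\in L(X)$ factors as $u_X^nu_{X,X'}zv$ with $z$ in some escaping $L(\alpha)$; a bound on $n$ bounds $|v|$ (no left recursion), and the assumption $\bigvee L(\alpha)=u_{X'}^\omega$ is what excludes infinitely many $z$ strictly below a fixed prefix of $u_{X'}^\omega$. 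Without the terminal-tail condition this induction is simply false (see the example above), so the plan needs the prefix-chain check and the right-tail check added before the rest can go through.
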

\begin{proof}
	By our assumptions of $G$, $L(X)$ is infinite.
	In the first step, we decide whether $L(X)$ is well-ordered. If not, then $o(X)$ is clearly not $\omega$ and we can stop.
	
	From now on, we know that $L(X)$ is well-ordered.
	Next, we decide whether $L(X)$ contains two words $u,v$ with $u<_sv$. If not, then $L(X)$ is a prefix chain and then
	$o(X)=\omega$. So we can assume that there exist members $u_0<_sv_0$ of $L(X)$. By $u_0<_sv_0<_\ell \bigvee L(X)=u_X^\omega$ (by Proposition~\ref{prop-sup-ux}),
	we get that $u_0<_su_X^\omega$.
	
	Now, we check whether there exists a sentential form $\beta$ containing at least one nonterminal with $X\Rightarrow^+ uX\beta$ for some $u\in u_X^+$.
	(Such a $\beta$ exists if and only if there is a production of the form $X'\to \alpha X''\gamma$ with $X'\approx X''\approx X$ and with $\gamma$ containing at least one nonterminal.)
	If so, then $L(\beta)$ is infinite, moreover, $uu_0L(\beta)<_suv_0w$ for any member $w$ of $L(\beta)$. Since $uu_0L(\beta)\subseteq L(X)$ and $uv_0w\in L(X)$, we get that
	$L(X)$ has some element preceded by an infinite number of other members of $L(X)$, hence $o(X)$ cannot be $\omega$.
	
	Hence we can assume that whenever $X\Rightarrow^+ uX\beta$, then $\beta\in\Sigma^*$ and $u\in u_X^+$.
	
	By induction, we can decide for each nonterminal $Y\prec X$ whether $o(Y)=\omega$ or not. Since into $\omega$ no other
	infinite order type can be embedded, if there exists some $Y\prec X$ with $o(Y)\neq\omega$, we can conclude $o(X)\neq\omega$ as well. So we can assume $o(Y)=\omega$ for each $Y\prec X$.
	
	Now let us consider an escaping production $X'\to\alpha$ with $X'\approx X$ such that $\alpha$ contains at least one nonterminal, thus $L(\alpha)$ is infinite.
	By Corollary~\ref{cor-alpha-computable}, we can effectively decide whether $o(\alpha)=\omega$ or not -- if not, then $o(X)$ again cannot be $\omega$ and we can stop.
	Hence, we can assume that for any such escaping production, $o(\alpha)=\omega$ as well. 
	By Lemma~\ref{lem-sequence}, we can generate a sequence $w_0,w_1,\ldots$ belonging to $L(\alpha)$ which is either an ascending or a descending chain.
	Since $L(\alpha)$ can be embedded into $L(X')$, it has to be well-ordered as well, ruling out the possibility of being a descending chain.
	Thus, we can compute the supremum $w=\bigvee w_i$ of the sequence as well, which, as $o(\alpha)=\omega$, has to be $\bigvee L(\alpha)$.
	
	Clearly, $w=\bigvee L(\alpha)\leq \bigvee L(X')=u_{X'}^\omega$ as $L(\alpha)\subseteq L(X')$,
	so either $w=u_{X'}^\omega$ or $w<_su_{X'}^\omega$. If $w<_su_{X'}^\omega$, that is, $w<_su_{X'}^N$ for some $N>0$, then, as $L(X')$ contains some word $x$ beginning with $u_{X'}^N$,
	we get that $L(\alpha)$ is an infinite subset of $L(X')$ strictly smaller than some member of $L(X')$, hence $o(X')$, hence also $o(X)$ are also greater than $\omega$ in that case as well and we can stop.
	
	Hence, we can assume that for each escaping production $X'\to\alpha$ with $X'\approx X$ and with $\alpha$ containing some nonterminal
	we have $\bigvee L(\alpha)=u_{X'}^\omega$, and $o(\alpha)=\omega$. As any finite union of linear orderings of order type $\omega$ is still $\omega$ if the suprema of the orderings coincide,
	we get that if $\alpha_1,\ldots,\alpha_t$ are all the alternatives for some nonterminal $X'\approx X$, then $o\Bigl(\mathop\bigcup\limits_{i\in[t]}L(\alpha_i)\Bigr)=\omega$
	with supremum $u_{X'}^\omega$.
	
	We claim that in this case $o(X)=\omega$. Since $\bigvee L(X)=u_X^\omega$, it suffices to show for each fixed $w<_pu_X^\omega$ that $L(X)_{<_\ell w}$ is finite.
	Since there is only a finite number of prefixes of $w$, this is equivalent to state the finiteness of $L(X)_{<_s w}$.
	Each word $w'\in L(X)$ can be factored as $w'=u_X^nu_{X,X'}zv$ for some integer $n\geq 0$, nonterminal $X'\approx X$ and words $z,v\in\Sigma^*$ such that
	$z\in L(\alpha)$ for some sentential form $\alpha$ for which $X'\to\alpha$ is an escaping production. If $w'<_s w$, then we have an upper bound
	for $n$, which in turn places an upper bound for $|v|$ since there are no left-recursive nonterminals.
	Hence, there are only finitely many possibilities for choosing $n$, $X'$, $\alpha$ and $v$
	and it suffices to see that for each such choice, the number of possible words $w'$ is finite.
	
	So let us write $w$ as $w=u_X^nu_{X,X'}w_1$. The condition $w'=u_X^nu_{X,X'}zv<_su_X^nu_{X,X'}w_1=w$ is equivalent to
	$zv<_sw_1$ for the fixed words $v$ and $w_1$. Let us assume that there are infinitely many such words $w'<_s w$ under the chosen values of $n$, $X'$, $\alpha$ and $v$.
	This entails $zv<_sw_1$ for an infinite number of words $z\in L(\alpha)$. Now as $zv<_sw_1$ can happen if either $z<_sw_1$ or $z<_pw_1$ and $w_1=zw_1'$ for some $v<_sw_1'$
	and this latter case can hold only for a finite number of words $z$ (as there are only finitely many prefixes of $w_1$), there has to be
	an infinite number of words $z$ in $L(\alpha)$ with $z<_sw_1$ for the finite prefix $w_1$ of $u_{X'}^\omega$.
	Let $L'$ be the (infinite) set of these words $z$. Then we have $\bigvee L'\leq_\ell w_1$ but since $L'$ is infinite and of order type $\omega$ (as $o(\alpha)=\omega$ as well),
	$\bigvee L'$ is an $\omega$-word, thus $\bigvee L'<_sw_1<_pu_{X'}^\omega$. Moreover, as $o(\alpha)=\omega$ and $L'$ is an infinite subset of $L(\alpha)$,
	it has to be the case that $\bigvee L'=\bigvee L(\alpha)$ so $\bigvee L(\alpha)<_su_{X'}^\omega$ which is a contradiction as we know that $\bigvee L(\alpha)=u_{X'}^\omega$.
	Hence there can be only a finite number of possible words $z$, showing our claim that $o(X)=\omega$.
\end{proof}

Combining Theorem~\ref{thm-omega-decidable} and Corollary~\ref{cor-alpha-computable} we get:
\begin{corollary}
	It is decidable for each sentential form $\alpha$ whether $o(\alpha)=\omega$ holds.
	Also, it is decidable whether $o(L(G))=\omega$ holds.
\end{corollary}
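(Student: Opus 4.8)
The plan is to assemble the statement from the two facts just established: Theorem~\ref{thm-omega-decidable} decides whether $o(X)=\omega$ for a recursive nonterminal $X$, and Corollary~\ref{cor-alpha-computable} decides whether $o(\alpha)=\omega$ for a sentential form $\alpha=X_1\ldots X_n$ once we know, for each $X_i$, whether $o(X_i)=\omega$. By the standing assumptions on $G$, every nonterminal other than possibly $S$ is recursive, so Theorem~\ref{thm-omega-decidable} supplies this bit for every recursive nonterminal, and for terminals $a$ we trivially have $o(a)=1$; hence Corollary~\ref{cor-alpha-computable} already settles $o(\alpha)=\omega$ for every sentential form over $(N\setminus\{S\})\cup\Sigma$. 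The only thing still to determine is whether $o(S)=\omega$ --- equivalently, since $L(G)=L(S)$, whether $o(L(G))=\omega$ --- and once this single bit is in hand, an arbitrary sentential form (which may also contain occurrences of $S$) is handled verbatim by Corollary~\ref{cor-alpha-computable}.

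To decide whether $o(L(G))=\omega$: if $S$ is recursive, apply Theorem~\ref{thm-omega-decidable} to $S$ directly. Otherwise $S$ is non-recursive, so no right-hand side of an $S$-production can contain $S$ or any nonterminal $\approx S$ (such an occurrence would witness a derivation $S\Rightarrow^+ uSv$); writing $\alpha_1,\ldots,\alpha_m$ for these right-hand sides, each $\alpha_i$ is a sentential form over $(N\setminus\{S\})\cup\Sigma$ and $L(G)=\bigcup_{i\in[m]}L(\alpha_i)$. By the first paragraph we can decide for each $i$ whether $o(\alpha_i)=\omega$, and $L(\alpha_i)$ is finite exactly when $\alpha_i\in\Sigma^*$. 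If some $\alpha_i\notin\Sigma^*$ has $o(\alpha_i)\neq\omega$, then $L(\alpha_i)$ is infinite and embeds into $L(G)$ with an infinite order type different from $\omega$, so $o(L(G))\neq\omega$ and we stop; otherwise each $o(\alpha_i)$ is either finite or $\omega$, and what remains is to decide whether a finite union of such languages has order type $\omega$.

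This last point is the crux, and the step I expect to be the main obstacle. Given context-free languages $L_1,\ldots,L_m$, each of order type finite or $\omega$, their union is automatically well-ordered (a finite union of well-orderings has no descending $\omega$-chain), so $o(\bigcup_{i\in[m]}L_i)=\omega$ iff the union is infinite and no word in it has infinitely many predecessors in the union. The delicate observation --- the one point needing real care --- is that for an index $i$ with $o(L_i)=\omega$ and $\varrho_i=\bigvee L_i$ (a computable regular $\omega$-word by Lemma~\ref{lem-omega-supremum}) and for any finite word $w$, the set $(L_i)_{<_\ell w}$ is infinite if and only if $\varrho_i<_s w$: an infinite subset of $L_i$ is cofinal in $L_i$ and therefore has supremum $\varrho_i$, while $\varrho_i$, being infinite, can never be a prefix of the finite word $w$, so $\varrho_i\leq_\ell w$ forces $\varrho_i<_s w$. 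Since the $L_i$ of finite order type contribute only finitely many predecessors to any word, we conclude that $o(\bigcup_i L_i)=\omega$ iff some $L_i$ is infinite and, for every $i$ with $o(L_i)=\omega$ and every $j\in[m]$, $L_j\cap\{w:\varrho_i<_s w\}=\emptyset$. Each of these conditions is decidable: $\varrho_i$ is a computable regular $\omega$-word, $\{w:\varrho_i<_s w\}$ is regular by Proposition~\ref{prop-rajzos}, and emptiness of the intersection of a context-free with a regular language is decidable. Applying this with $L_i=L(\alpha_i)$ completes the decision of $o(L(G))=\omega$, hence of $o(S)=\omega$, and feeding that last bit into Corollary~\ref{cor-alpha-computable} settles arbitrary sentential forms; everything else is routine assembly of facts already proved.
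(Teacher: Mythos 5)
Your proposal is correct and follows essentially the paper's own route: decide $o(S)=\omega$ via Theorem~\ref{thm-omega-decidable} when $S$ is recursive, and otherwise reduce to the alternatives of $S$, combining Theorem~\ref{thm-omega-decidable} with Corollary~\ref{cor-alpha-computable} to handle arbitrary sentential forms. Your test for the union of the alternatives (no word of any $L(\alpha_j)$ lies $<_s$-above the supremum of an infinite alternative) is equivalent to the paper's condition that every infinite alternative's supremum coincides with that of the union; you simply make the decidability of this step (via Lemma~\ref{lem-omega-supremum}, Proposition~\ref{prop-rajzos} and emptiness of a context-free--regular intersection) more explicit than the paper's terse proof does.
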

\begin{proof}
	The first statement is a direct consequence of Theorem~\ref{thm-omega-decidable} and Corollary~\ref{cor-alpha-computable}.
	Now if the start symbol $S$ of $G$ is recursive, then $o(L(G))\mathop{=}\limits^{?}\omega$ is decidable by
	Theorem~\ref{thm-omega-decidable}. Otherwise, if $\alpha_1,\ldots,\alpha_k$ are the alternatives of $S$, then
	$o(L(G))=\omega$ if and only if $o(\alpha_i)=\omega$ for each $i\in[k]$ generating an infinite language and moreover,
	$\bigvee L(\alpha_i)=\bigvee\mathop\bigcup\limits_{i\in[k]}L(\alpha_i)$ for each $i\in[k]$ with $L(\alpha_i)$ being
	infinite. These conditions are decidable by Corollary~\ref{thm-omega-decidable} and Lemma~\ref{lem-omega-supremum}.
\end{proof}

Now we are ready to show the main result of the paper:
\begin{theorem}
\label{thm-omega-omega-computable}
	There exists an algorithm which awaits a context-free grammar $G=(N,\Sigma,P,S)$ generating a well-ordered language
	and if $o(L(G))<\omega^2$, then returns $o(L(G))$ (otherwise enters an infinite recursion).
\end{theorem}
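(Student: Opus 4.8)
The plan is to give a recursive procedure $\texttt{ord}$ that, given a context-free grammar for a well-ordered language $L$ (kept internally in the normal form fixed at the start of the section, separating off $\varepsilon$ when necessary), returns $o(L)$ coded as a pair $(k,n)$ standing for $\omega\cdot k+n$, provided $o(L)<\omega^2$. The procedure works by \emph{first-letter decomposition}. First, if $L$ is finite --- a decidable property of context-free languages --- return $|L|$, obtained by enumeration. Otherwise, using the results proved above it is decidable whether $o(L)=\omega$; if so, return $\omega$. In the remaining case, for each $a\in\Sigma$ compute a context-free grammar for the left quotient $a^{-1}L=\{w:aw\in L\}$ (left quotient by a regular set is an effective operation preserving context-freeness), renormalise it, recursively compute $o_a:=\texttt{ord}(a^{-1}L)$ for every $a$ with $a^{-1}L\neq\emptyset$, decide whether $\varepsilon\in L$, and return the ordinal sum $[\varepsilon\in L]+\sum_{a\in\Sigma}o_a$ with the letters taken in increasing order; on the $(k,n)$-coding this is a trivial computation using $m+\omega\cdot k=\omega\cdot k$ for $k\ge 1$.

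Correctness of the recursive step is immediate: $w\mapsto aw$ is an order isomorphism of $(a^{-1}L,<_\ell)$ onto $(L\cap a\Sigma^*,<_\ell)$, and $L=(L\cap\{\varepsilon\})\cup\bigcup_{a\in\Sigma}(L\cap a\Sigma^*)$ with $\varepsilon<_\ell(L\cap a\Sigma^*)<_s(L\cap b\Sigma^*)$ whenever $a<b$, so $o(L)=[\varepsilon\in L]+\sum_{a}o(a^{-1}L)$. Each $a^{-1}L$ is context-free and, being order-isomorphic to a subordering of $L$, well-ordered with $o(a^{-1}L)\le o(L)<\omega^2$, so all recursive calls inherit the hypotheses; together with the two base cases (decidability of finiteness, and decidability of $o(\cdot)=\omega$) this shows that \emph{if} the recursion halts on an input with $o(L)<\omega^2$, it returns $o(L)$.

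The crux --- and the step I expect to be the main obstacle --- is termination when $o(L)<\omega^2$. The recursion tree is finitely branching (out-degree at most $|\Sigma|$), so by K\"onig's lemma it suffices to exclude an infinite branch $L=L_0,L_1,L_2,\dots$ with $L_{i+1}=a_{i+1}^{-1}L_i$ in which every $L_i$ is infinite with $o(L_i)>\omega$ (exactly the non-leaf nodes). The ordinals $o(L_i)$ are non-increasing, hence eventually equal to some $\gamma=\omega\cdot k+n$ with $\gamma>\omega$, so $k\ge1$. For $i$ large, writing $o(L_i)=\big([\varepsilon\in L_i]+\sum_{a<a_{i+1}}o(a^{-1}L_i)\big)+o(L_{i+1})+\big(\sum_{a>a_{i+1}}o(a^{-1}L_i)\big)$ and using $o(L_{i+1})=\gamma=o(L_i)$, a short ordinal computation (a nonzero ordinal added on the \emph{right} of $\gamma=\omega k+n$ strictly increases it) forces the last block to be $0$, so $a_{i+1}$ is the \emph{largest} first letter of $L_i$, and forces the first block to be some finite ordinal $f_i$. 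Thus $L_i$ splits, as an ordered set, into a finite block of size $f_i$ (the word $\varepsilon$ if present, together with all words of $L_i$ whose first letter is below $a_{i+1}$) followed by a copy of $L_{i+1}$. Iterating, every word of $L_N$ (past the stabilisation point $N$) reaches one of these finite blocks after finitely many ``peel the leading letter'' steps, so $L_N$ is order-isomorphic to the $\omega$-indexed ordered sum $f_N+f_{N+1}+f_{N+2}+\cdots$ of finite orders, giving $o(L_N)\le\omega$; as $L_N$ is infinite, $o(L_N)=\omega$, contradicting $o(L_N)=\gamma>\omega$. Hence no infinite branch exists and the procedure halts.

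Finally, when $o(L(G))\ge\omega^2$ the algorithm genuinely diverges: a finite ordered sum of ordinals $<\omega^2$ is again $<\omega^2$, so by the identity $o(L)=[\varepsilon\in L]+\sum_a o(a^{-1}L)$ some child $a^{-1}L$ has $o(a^{-1}L)\ge\omega^2$, hence is infinite with order type $\neq\omega$, i.e.\ again a non-leaf; iterating produces an infinite branch. This is the ``enters an infinite recursion'' alternative of the statement.
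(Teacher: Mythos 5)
Your proposal is correct, but it takes a genuinely different route from the paper's. The paper's algorithm cuts $L(G)$ at an $\omega$-word: using Lemma~\ref{lem-sequence} it generates an increasing sequence $w_0<w_1<\cdots$ in $L(G)$ with regular supremum $uv^\omega$; if $L(G)_{\geq uv^\omega}\neq\emptyset$ it recurses on $L(G)_{<uv^\omega}$ and $L(G)_{>uv^\omega}$ (grammars obtained by intersecting with the regular languages of Proposition~\ref{prop-rajzos}), and otherwise it searches for an index $n$ with $o(L(G)_{>w_n})=\omega$ and recurses on $L(G)_{\leq w_n}$; termination follows by well-founded induction because every recursive call has strictly smaller order type once $o(L(G))<\omega^2$. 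You instead decompose by the leading letter, $o(L)=[\varepsilon\in L]+\sum_a o(a^{-1}L)$, recurse on all quotients, and prove termination by excluding an infinite branch of the finitely branching call tree: along such a branch the (non-increasing) order types stabilize at some $\gamma=\omega k+n>\omega$, the identity $\gamma=A_i+\gamma+B_i$ forces the upper block $B_i$ to vanish and the lower block $A_i$ to be finite (this is precisely where $\gamma<\omega^2$ enters, via $\omega+\gamma>\gamma$ for $\gamma$ of this form), and then the stabilized language is an $\omega$-indexed sum of finite blocks, of type at most $\omega$ --- a contradiction; this argument is sound. Both proofs rest on the same two pillars, decidability of finiteness and of $o(\cdot)=\omega$ (Theorem~\ref{thm-omega-decidable} and Corollary~\ref{cor-alpha-computable}). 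The trade-off: the paper's recursion strictly decreases the ordinal at every call, so termination is immediate, but it needs the regular-supremum and regular-cut machinery at the top level; your recursion needs only the standard effective left-quotient construction (plus re-normalizing each quotient grammar before invoking the $o=\omega$ test, as you note), at the price of a subtler termination argument, since quotients need not decrease the order type. A side benefit of your version is that it explicitly establishes the parenthetical ``otherwise enters an infinite recursion'', which the paper's proof does not address.
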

\begin{proof}
	We claim that the following algorithm correctly computes $o(L(G))$ and terminates whenever $o(L(G))<\omega^2$:
	\begin{enumerate}
		\item If $L(G)$ is finite, then return its size.
		\item If $o(L(G))=\omega$, then return $\omega$.
		\item Generate a sequence $w_0<w_1<\ldots$ of members of $L(G)$ and their supremum $uv^\omega$.
		\item If $L(G)_{\geq uv^\omega}$ is nonempty, then compute $o_1=o(L(G)_{<uv^\omega})$ and $o_2=o(L(G)_{>uv^\omega})$
			recursively and return $o_1+o_2$.
		\item Otherwise, $uv^\omega=\bigvee L(G)$. For each $n\geq 0$ in increasing order, test whether $o(L(G)_{>w_n})=\omega$ holds.
			If so, then compute $o_1=o(L(G)_{\leq w_n})$ recursively and return $o_1+\omega$. Otherwise increase $n$.
	\end{enumerate}
	Let us consider an example run of the above algorithm first, then we prove its correctness and conditional
	termination.
	
	Let the language $L$ be $a^*+b^na^n+c$. It has the order type $\omega\times 2 + 1$, since the lexicograpical ordering of the words of $L$ looks like $\varepsilon \leq_\ell a \leq_\ell aa \leq_\ell aaa \leq_\ell \ldots \leq_\ell ba \leq_\ell bbaa \leq_\ell bbbaaa \leq_\ell \ldots \leq_\ell c$.
	
	First, we check the finiteness of the language. Since $L$ is not finite, we check whether it has the order type $\omega$. Since it is not $\omega$, we generate an infinite sequence of words of $L$ and its supremum. Let us say we generate the sequence $ba<b^2a^2<\ldots<b^na^n<\ldots$ with its supremum $b^\omega$. Now we check that whether the language $L$ contains any word which is lexicographically greater than this supremum.
	Since $L_{\geq b^\omega} = c$ is nonempty, we compute $o(a^*+b^na^n)$ and $o(c)$ recursively (and at the end, we return their sum).
	
	So we compute $o(K)$ for $K=a^*+b^na^n$, which should be $\omega\times 2$.
	Since it is not finite and also not $\omega$, we try to cut it and generate a sequence again.	
	Say we generate the sequence $a < aa < ba < b^2a^2 < \ldots < b^na^n < \ldots$ of its members, with its supremum $b^\omega$ (note that the algorithm never generates a sequence like this since we use the pumping lemma to generate the words, so we use it just to explain how the algorithm works in a case like this).

	The language $K_{>b^\omega}$ is empty, the $\omega$-word, $b^\omega$ is the supremum of $K$ as well, so we have to find another cut point. To achieve this, we iterate through the sequence from $i=1$ and check whether the language containing the greater words than the word $w_i$:
	\begin{itemize}
		\item Cutting with the first word of the sequence we get that $o(K_{>a}) = o(aa(a^*)+b^na^n)$ is not $\omega$.
		(It's $\omega\times 2$.) So we increase the index and try again with the next word.
		\item If we use the second word, we have the same case: $o(K_{>aa}) = o(aaa(a^*)+b^na^n)$ is not $\omega$. (It's still $\omega\times 2$.) We increase the index.
		\item With the third word of the sequence we get that $o(K_{>ba}) = o(bbb^naaa^n)$ is $\omega$.
	\end{itemize}
	
	Now we have a new valid cut point, so we compute $o(K_{\leq ba}) = o(a^*+ba)$ recursively and return $o(a^*+ba)+\omega$. 
	So we have a recursive case again with the language $M = a^*+ba$. Since it is not finite, and its order type
	is still not $\omega$ -- it is actually $\omega+1$ -- we generate a sequence again. Say we generate the 
	sequence $a < aa < aaa < aaaa < \ldots$ with the supremum $a^\omega$. (In fact, it's guaranteed we generate
	some subsequence of this, with the same supremum.)
	Since $M_{\geq a^\omega} = ba$, we compute recursively $o(a^*)$ and $o(ba)$. Since the language $a^*$ has the order type $\omega$ we return it. Also, as $ba$ is finite, we return its size, $1$. So we get that $o(M) = \omega + 1$. 
	So on the previous level we get that $o(M)+\omega = \omega + 1 + \omega = \omega\times 2$.
	
	Finally, we compute $o(c)$. As $c$ is finite, we return its size and get $o(c) = 1$. At the top level we get back $\omega\times 2+1$ as the order type of the language $L$.

	Finishing this example, we first prove the conditional termination.
	The first three steps clearly terminate according to Theorem~\ref{thm-omega-decidable} and Lemma~\ref{lem-sequence}
	and the fact that it is decidable whether $L(G)$ is finite and if so, its members can be effectively enumerated. 
	
	For the supremum $uv^\omega$ of the words $w_0,w_1,\ldots\in L(G)$, we know that $L(G)_{<uv^\omega}$ is infinite (thus is at least $\omega$).
	Hence, $o(L(G))=o(L(G)_{<uv^\omega})+o(L(G)_{>uv^\omega})$ and the first summand is an infinite ordinal. If $L(G)_{\geq uv^\omega}$ is nonempty
	(which can be decided as well), then the second summand is also a nonzero ordinal. Thus, the first summand is guaranteed to be strictly smaller
	than $o(L(G))$ and since $o(L(G))$ is assumed to be smaller than $\omega^2$, and $o(L(G))=o_1+o_2$ for some $o_1\geq\omega$,
	we get that $o_2<o(L(G))$ as well. Thus, both recursive calls have an argument with a strictly less order type than $o(L(G))$, so these
	calls will eventually terminate, by well-founded induction.
	
	Finally, in step 5 we know that $\bigvee w_n=\bigvee L(G)$ and that $o(L(G))>\omega$ (as $o(L(G))\leq\omega$ is handled by the first two steps),
	thus as $o(L(G))<\omega^2$ by assumption, we have $o(L(G)_{>w_n})=\omega$ for some $n$. Thus, the iteration of step 5 eventually
	finds such an $n$ and terminates (as in that case $o(L(G)_{\leq w_n})$ is also less than $o(L(G))$, thus we can apply well-founded induction again
	for the recursive call).
	
	Correctness is clear since for steps 1 and 2 there is nothing to prove, step 3 cannot return anything and both of steps 4 and 5
	create a cut of $L(G)$ of the form $L(G)=L(G)_{\leq w}+L(G)_{>w}$ for some suitable (finite or infinite) word $w$, computes the order types
	of the two languages (which have strictly smaller order type, so applying well-founded induction we get their order type gets computed correctly)
	and returns their sum, which is correct.
\end{proof}

\section{Conclusion}
We showed that if $L(G)$ is known to be well-ordered with Hausdorff-rank at most one,
then $o(L(G))$ is computable. We strongly suspect that this result holds for the scattered case as well:
in fact, if it is decidable for a recursive nonterminal $X$ whether $o(X)=-\omega$ holds, then (by an algorithm
very similar to the one we gave for the well-ordered case) we can show that the order type of any scattered context-free
language of rank at most one is effectively computable. Note that $o(L_1L_2)$ can be $-\omega$ even if $o(L_1)=\omega$
as e.g. $o(a^*)=\omega$ but $o(a^*b)=-\omega$.

An open problem from~\cite{ESIK2011107} is the decidability status of the isomorphism problem of
\emph{deterministic} context-free
orderings (which form a proper subset of the unambiguous context-free ones).
The lexicographic orderings of deterministic context-free languages are called \emph{algebraic orderings} there
as they are exactly those isomorphic to the linear ordering of the leaves of an algebraic tree~\cite{doi:10.1142/S0129054111008155} in the sense of~\cite{COURCELLE198395}.

We do not know whether the isomorphism problem of scattered context-free orderings of rank $2$ is decidable:
by a standard reduction from the PCP problem one can construct a context-free grammar $G$ for an instance
$(u_1,v_1),\ldots,(u_n,v_n)$ of the PCP so that $L(G)$ is the union of the three languages
\[\{i_t\ldots i_1\cent u_{i_1}\ldots u_{i_t}\cent(aa)^k:t\geq 1,i_1,\ldots,i_t\in[n],k\geq 0\},\]
\[\{i_t\ldots i_1\cent v_{i_1}\ldots v_{i_t}\cent a(aa)^k:t\geq 1,i_1,\ldots,i_t\in[n],k\geq 0\}\]
and $\{i_t\ldots i_1\cent \$ a^k:t\geq 1,i_1,\ldots,i_t\in[n],k\geq 0\}$. Then, for each fixed $t\geq 1$
and $i_1,\ldots,i_t\in[n]$ the language $(i_t\ldots i_1)^{-1}L(G)$ has order type $\omega+-\omega$
if $i_1,\ldots,i_t$ is a solution of the given PCP instance and $\omega+\omega+-\omega$ otherwise,
hence $o(L(G))~=~\mathop\sum\limits_{t\geq 1,~i_1,\ldots,i_t\in[n]}o\Bigl((i_t\ldots i_1)^{-1}L(G)\Bigr)$
has the order type $\mathop\sum\limits_{t\geq 1,~i_1,\ldots,i_t\in[n]}(\omega+\omega+-\omega)$
if and only if the instance has no solution, where the tuples $(i_t,\ldots,i_1)$ are ordered lexicographically.
This latter sum is a quasi-dense sum of scattered orderings though,
so it does not prove undecidability of the isomorphism problem of scattered context-free orderings immediately,
but we conjecture that the problem is indeed undecidable.

Also, both context-free linear orderings in general and in the scattered case lack a characterization~\cite{DBLP:conf/fics/BloomE09}:
in fact, it is unclear which scattered orderings of rank two are context-free. (For rank one it is clear
as the rank one scattered order types are exactly the finite sums of natural numbers, $\omega$s and $-\omega$s
and these sums are all context-free, in fact, they are regular.)
\bibliography{biblio}{}

\begin{thebibliography}{10}
\providecommand{\bibitemdeclare}[2]{}
\providecommand{\surnamestart}{}
\providecommand{\surnameend}{}
\providecommand{\urlprefix}{Available at }
\providecommand{\url}[1]{\texttt{#1}}
\providecommand{\href}[2]{\texttt{#2}}
\providecommand{\urlalt}[2]{\href{#1}{#2}}
\providecommand{\doi}[1]{doi:\urlalt{http://dx.doi.org/#1}{#1}}
\providecommand{\bibinfo}[2]{#2}

\bibitemdeclare{inproceedings}{DBLP:conf/fics/BloomE09}
\bibitem{DBLP:conf/fics/BloomE09}
\bibinfo{author}{Stephen~L. \surnamestart Bloom\surnameend} \&
  \bibinfo{author}{Zolt{\'{a}}n \surnamestart {\'{E}}sik\surnameend}
  (\bibinfo{year}{2009}): \emph{\bibinfo{title}{Scattered Algebraic Linear
  Orderings}}.
\newblock In: {\sl \bibinfo{booktitle}{6th Workshop on Fixed Points in Computer
  Science, {FICS} 2009, Coimbra, Portugal, September 12-13, 2009.}}, pp.
  \bibinfo{pages}{25--29}.

\bibitemdeclare{article}{DBLP:journals/fuin/BloomE10}
\bibitem{DBLP:journals/fuin/BloomE10}
\bibinfo{author}{Stephen~L. \surnamestart Bloom\surnameend} \&
  \bibinfo{author}{Zolt{\'{a}}n \surnamestart {\'{E}}sik\surnameend}
  (\bibinfo{year}{2010}): \emph{\bibinfo{title}{Algebraic Ordinals}}.
\newblock {\sl \bibinfo{journal}{Fundam. Inform.}}
  \bibinfo{volume}{99}(\bibinfo{number}{4}), pp. \bibinfo{pages}{383--407},
  \doi{10.3233/FI-2010-255}.

\bibitemdeclare{article}{BLOOM200555}
\bibitem{BLOOM200555}
\bibinfo{author}{Stephen~L. \surnamestart Bloom\surnameend} \&
  \bibinfo{author}{Zoltán \surnamestart Ésik\surnameend}
  (\bibinfo{year}{2005}): \emph{\bibinfo{title}{The equational theory of
  regular words}}.
\newblock {\sl \bibinfo{journal}{Information and Computation}}
  \bibinfo{volume}{197}(\bibinfo{number}{1}), pp. \bibinfo{pages}{55 -- 89},
  \doi{10.1016/j.ic.2005.01.004}.

\bibitemdeclare{article}{doi:10.1142/S0129054111008155}
\bibitem{doi:10.1142/S0129054111008155}
\bibinfo{author}{Stephen~L. \surnamestart Bloom\surnameend} \&
  \bibinfo{author}{Zoltán \surnamestart Ésik\surnameend}
  (\bibinfo{year}{2011}): \emph{\bibinfo{title}{Algebraic linear orderings}}.
\newblock {\sl \bibinfo{journal}{International Journal of Foundations of
  Computer Science}} \bibinfo{volume}{22}(\bibinfo{number}{02}), pp.
  \bibinfo{pages}{491--515}, \doi{10.1142/S0129054111008155}.

\bibitemdeclare{article}{COURCELLE198395}
\bibitem{COURCELLE198395}
\bibinfo{author}{Bruno \surnamestart Courcelle\surnameend}
  (\bibinfo{year}{1983}): \emph{\bibinfo{title}{Fundamental properties of
  infinite trees}}.
\newblock {\sl \bibinfo{journal}{Theoretical Computer Science}}
  \bibinfo{volume}{25}(\bibinfo{number}{2}), pp. \bibinfo{pages}{95 -- 169},
  \doi{10.1016/0304-3975(83)90059-2}.

\bibitemdeclare{inproceedings}{10.1007/978-3-642-22321-1_19}
\bibitem{10.1007/978-3-642-22321-1_19}
\bibinfo{author}{Zolt{\'a}n \surnamestart {\'E}sik\surnameend}
  (\bibinfo{year}{2011}): \emph{\bibinfo{title}{Scattered Context-Free Linear
  Orderings}}.
\newblock In \bibinfo{editor}{Giancarlo \surnamestart Mauri\surnameend} \&
  \bibinfo{editor}{Alberto \surnamestart Leporati\surnameend}, editors: {\sl
  \bibinfo{booktitle}{Developments in Language Theory}},
  \bibinfo{publisher}{Springer Berlin Heidelberg}, \bibinfo{address}{Berlin,
  Heidelberg}, pp. \bibinfo{pages}{216--227}, \doi{10.1017/CBO9780511566097}.

\bibitemdeclare{article}{ESIK2011107}
\bibitem{ESIK2011107}
\bibinfo{author}{Zolt\'an \surnamestart \'Esik\surnameend}
  (\bibinfo{year}{2011}): \emph{\bibinfo{title}{An undecidable property of
  context-free linear orders}}.
\newblock {\sl \bibinfo{journal}{Information Processing Letters}}
  \bibinfo{volume}{111}(\bibinfo{number}{3}), pp. \bibinfo{pages}{107 -- 109},
  \doi{10.1016/j.ipl.2010.10.018}.

\bibitemdeclare{inproceedings}{10.1007/978-3-642-29344-3_25}
\bibitem{10.1007/978-3-642-29344-3_25}
\bibinfo{author}{Zolt{\'a}n \surnamestart {\'E}sik\surnameend} \&
  \bibinfo{author}{Szabolcs \surnamestart Iv{\'a}n\surnameend}
  (\bibinfo{year}{2012}): \emph{\bibinfo{title}{Hausdorff Rank of Scattered
  Context-Free Linear Orders}}.
\newblock In \bibinfo{editor}{David \surnamestart
  Fern{\'a}ndez-Baca\surnameend}, editor: {\sl \bibinfo{booktitle}{LATIN 2012:
  Theoretical Informatics}}, \bibinfo{publisher}{Springer Berlin Heidelberg},
  \bibinfo{address}{Berlin, Heidelberg}, pp. \bibinfo{pages}{291--302},
  \doi{10.1112/plms/s3-4.1.177}.

\bibitemdeclare{article}{GelleIvanTCS}
\bibitem{GelleIvanTCS}
\bibinfo{author}{Kitti \surnamestart Gelle\surnameend} \&
  \bibinfo{author}{Szabolcs \surnamestart Iv\'an\surnameend}:
  \emph{\bibinfo{title}{The ordinal generated by an ordinal grammar is
  computable}}.
\newblock {\sl \bibinfo{journal}{Theoretical Computer Science}},
  \doi{10.1016/j.tcs.2019.04.016}.
\newblock \urlprefix\url{https://arxiv.org/abs/1811.03595}.
\newblock \bibinfo{note}{To appear.}

\bibitemdeclare{article}{ITA_1980__14_2_131_0}
\bibitem{ITA_1980__14_2_131_0}
\bibinfo{author}{Stephan \surnamestart Heilbrunner\surnameend}
  (\bibinfo{year}{1980}): \emph{\bibinfo{title}{An algorithm for the solution
  of fixed-point equations for infinite words}}.
\newblock {\sl \bibinfo{journal}{RAIRO - Theoretical Informatics and
  Applications - Informatique Th\'eorique et Applications}}
  \bibinfo{volume}{14}(\bibinfo{number}{2}), pp. \bibinfo{pages}{131--141},
  \doi{10.1051/ita/1980140201311}.

\bibitemdeclare{book}{Hopcroft+Ullman/79/Introduction}
\bibitem{Hopcroft+Ullman/79/Introduction}
\bibinfo{author}{John~E. \surnamestart Hopcroft\surnameend} \&
  \bibinfo{author}{Jeff~D. \surnamestart Ullman\surnameend}
  (\bibinfo{year}{1979}): \emph{\bibinfo{title}{Introduction to Automata
  Theory, Languages, and Computation}}.
\newblock \bibinfo{publisher}{Addison-Wesley Publishing Company}.

\bibitemdeclare{book}{books/daglib/0072413}
\bibitem{books/daglib/0072413}
\bibinfo{author}{Christos~H. \surnamestart Papadimitriou\surnameend}
  (\bibinfo{year}{1994}): \emph{\bibinfo{title}{Computational complexity.}}
\newblock \bibinfo{publisher}{Addison-Wesley}.

\bibitemdeclare{book}{rosenstein}
\bibitem{rosenstein}
\bibinfo{author}{J.G. \surnamestart Rosenstein\surnameend}
  (\bibinfo{year}{1982}): \emph{\bibinfo{title}{Linear Orderings}}.
\newblock \bibinfo{series}{Pure and Applied Mathematics},
  \bibinfo{publisher}{Elsevier Science}.

\bibitemdeclare{misc}{jalex}
\bibitem{jalex}
\bibinfo{author}{Jacob~Alexander \surnamestart Stark\surnameend}
  (\bibinfo{year}{2015}): \emph{\bibinfo{title}{Ordinal Arithmetic}}.
\newblock \urlprefix\url{https://jalexstark.com/notes/OrdinalArithmetic.pdf}.

\bibitemdeclare{article}{DBLP:journals/ita/Thomas86}
\bibitem{DBLP:journals/ita/Thomas86}
\bibinfo{author}{Wolfgang \surnamestart Thomas\surnameend}
  (\bibinfo{year}{1986}): \emph{\bibinfo{title}{On Frontiers of Regular
  Trees}}.
\newblock {\sl \bibinfo{journal}{{ITA}}}
  \bibinfo{volume}{20}(\bibinfo{number}{4}), pp. \bibinfo{pages}{371--381},
  \doi{10.1051/ita/1986200403711}.

\end{thebibliography}
\bibliographystyle{eptcs}
\end{document}